\documentclass{article} % For LaTeX2e
\usepackage{iclr2027_conference,times}

\usepackage{amsmath,amsfonts,bm}

\def\eqref#1{equation~\ref{#1}}
\def\plaineqref#1{\ref{#1}}
\def\1{\bm{1}}

\DeclareMathAlphabet{\mathsfit}{\encodingdefault}{\sfdefault}{m}{sl}
\SetMathAlphabet{\mathsfit}{bold}{\encodingdefault}{\sfdefault}{bx}{n}

\usepackage{hyperref}
\hypersetup{hypertexnames=false}
\usepackage{url}
\usepackage{graphicx}

\usepackage{amsmath}
\usepackage{amssymb}
\usepackage{amsthm}
\usepackage{algorithm}
\usepackage{algpseudocode}
\usepackage{booktabs}

\newtheorem{theorem}{Theorem}
\newtheorem{lemma}{Lemma}

\newtheorem{corollary}{Corollary}

\title{Approximation Algorithms for Dual-Memory Request Scheduling in Disaggregated LLM Inference}

\author{\parbox{0.96\textwidth}{\raggedright
Xin Wang$^{a}$, Hong Shen$^{a}$, Hui Tian$^{b}$, Ye Tao$^{a}$\\[0.4em]
{\normalfont\small $^{a}$School of Engineering and Technology, Central Queensland University, Australia}\\
{\normalfont\small $^{b}$School of Information and Communication Technology, Griffith University, Australia}
}}

\iclrfinalcopy % Show authors and suppress review line numbers in this preprint.
\begin{document}
	\maketitle
\lhead{Preprint}
\begin{abstract}
	Disaggregated LLM inference executes the prefill and decode phases on separate resource pools, reducing interference between the two phases. However, this architecture introduces a coupled scheduling problem because a request that completes prefill retains its prompt KV cache in prefill memory until it is admitted to the decode pool. Prefill and decode admission decisions therefore jointly determine memory occupancy and request completion times. To the best of our knowledge, this problem lacks a formal formulation, and no approximation guarantees are known for request scheduling in disaggregated LLM inference. We formalize this setting as the dual-memory request scheduling (DMRS) problem, whose objective is to minimize the sum of request completion times, and prove that it is strongly NP-hard. We develop two deterministic polynomial-time approximation algorithms. The first applies to instances satisfying \(\alpha<1\) and \(\beta<1\) and achieves an approximation ratio of \(8+\frac{4}{1-\alpha}+\frac{8}{1-\beta}\), where \(\alpha\) and \(\beta\) are the maximum normalized prefill and peak decode memory demands among all requests, respectively. In particular, it yields a \(32\)-approximation when
	\(\alpha,\beta\leq 1/2\). The second algorithm applies to all feasible instances and achieves a uniform \(36\)-approximation. Finally, evaluations on two H100 GPUs using trace-derived offline workloads show that the proposed algorithms reduce the sum of request completion times relative to multiple vLLM deployment baselines.
\end{abstract}
	
\section{Introduction}
\label{sec:introduction}
Large language model (LLM) inference consists of a prefill phase followed
by an autoregressive decode phase. During prefill, the model processes the prompt
tokens in parallel and constructs the key-value (KV) cache. During decoding,
output tokens are generated sequentially, and the KV cache expands with each
token. Prefill is generally compute-bound,
whereas decoding is more sensitive to memory bandwidth and KV cache
capacity~\citep{splitwise,distserve}.  When colocated on the same GPUs, the two phases compete for shared resources. To reduce this interference, recent
systems execute the prefill and decode phases on separate GPU instances or
resource pools~\citep{splitwise,distserve,tetriInfer,mooncake,hexgen2}. This
architecture allows resources to be allocated independently to the two
phases.

However, prefill-decode disaggregation does not eliminate the scheduling
dependence between the two phases. After completing prefill, a request may be unable to enter
decoding immediately if the available decode memory is insufficient. While
waiting, the request retains its prompt KV cache in prefill memory, prolonging
its occupancy of that memory and potentially blocking subsequent prefill
admissions. Upon decode admission, the KV cache is transferred to decode
memory, and the retained prefill memory is released. The request's decode
memory usage then grows as output tokens are generated. Consequently, prefill
and decode admission decisions are coupled through their effects on memory
occupancy in the two resource pools.

Existing systems work has investigated serving architectures and resource
allocation~\citep{distserve,splitwise,tetriInfer,hexgen2}, request scheduling
and load balancing to improve goodput and meet service-level
objectives~\citep{taichi,wu2025arrow,pdserve}, and KV cache transfer, storage,
and management~\citep{mooncake,memserve,smartgen,hong2025semi}. These studies
focus primarily on the design and implementation of practical serving
systems. Related theoretical work has analyzed latency-oriented request
scheduling subject to a unified memory capacity
constraint~\citep{jaillet2025online,ao2025fluid,wang2025llm,feng2026competitive}.

To the best of our knowledge, no prior theoretical scheduling model jointly
considers prefill and decode admission decisions under separate memory
capacities, and no approximation guarantees are known for minimizing the sum
of request completion times in this setting. We address this gap by studying offline
scheduling of a batch of requests under prefill-decode disaggregation. Our main contributions are summarized as follows.

\noindent\textbf{Contribution 1:}
We develop a system model for prefill and decode admission scheduling
in disaggregated LLM inference. Based on this model, we formulate the
dual-memory request scheduling (DMRS) problem with the objective of
minimizing the sum of request completion times. We prove that DMRS is
strongly NP-hard.

\noindent\textbf{Contribution 2:}
We develop a deterministic polynomial-time approximation algorithm for
instances satisfying \(\alpha<1\) and \(\beta<1\). It achieves an
approximation ratio of
\(8+\frac{4}{1-\alpha}+\frac{8}{1-\beta}\),
where \(\alpha\) and \(\beta\) are the maximum normalized prefill and peak
decode memory demands, respectively. In particular, it yields a
\(32\)-approximation when \(\alpha,\beta\leq1/2\).

\noindent\textbf{Contribution 3:}
We develop a second deterministic polynomial-time algorithm that applies
to every feasible instance, including \(\alpha=1\) or \(\beta=1\), and achieves a uniform \(36\)-approximation.

\noindent\textbf{Contribution 4:}
We evaluate both algorithms on synthetic instances and workloads derived
from the Azure LLM Inference Trace 2023~\citep{splitwise}. The CPU experiments compare the algorithms' objective values with the exact
optimum obtained from a time-indexed integer linear programming (ILP)
formulation and with a linear programming (LP) lower bound.
Serving experiments on two H100 GPUs use Qwen2.5
(7B)~\citep{yang2024qwen25} and Llama~3.1
(8B)~\citep{grattafiori2024llama} to compare total completion time against
three vLLM baselines.

\section{System Model and Formulation}
\label{model_formulation}
In this section, we first present the system model. We then formulate DMRS,
establish its computational complexity, and derive an LP relaxation for our approximation analysis. Appendix~\ref{app:notation} summarizes the
notation used throughout the paper.

\subsection{System Model}
\label{sec:system_model}
In our model, the prefill and decode phases execute on separate
resource pools with respective memory capacities
\(M_P,M_D\in\mathbb Z_{>0}\).

 \noindent\textbf{Execution Setup.}
Time is divided into slots \([t,t+1)\), where
\(t\in\{0,1,2,\ldots\}\). KV cache memory is measured in token units, with one
unit representing the memory required to store the key and value states of one
token. During prefill, each active request processes its prompt for a fixed number
of slots. During decoding, each active request completes one decode
iteration and generates one output token per slot. Multiple requests may execute
concurrently in each phase, subject to the corresponding memory capacity.
Execution is non-preemptive within each phase: once admitted to a phase, a
request remains active until it completes that phase.

\noindent\textbf{Inference Requests.}
Let \(\mathcal I=\{1,\ldots,n\}\) denote the set of requests. Each request
\(i\in\mathcal I\) is characterized by
\((m_i,p_i,d_i)\in\mathbb Z_{>0}^3\), where \(m_i\) is its effective prefill
processing time, measured in advance under the given deployment configuration
and expressed in slots; \(p_i\) is the number of input tokens in its prompt
and, equivalently, its prefill memory demand in token units; and \(d_i\) is
the number of output tokens to be generated and, equivalently, the number of
decode iterations. After generating \(u\) output tokens, request \(i\) has
decode memory demand \(r_i(u)=p_i+u\). Its peak decode memory demand is
\(h_i=r_i(d_i)=p_i+d_i\).

Throughout the paper, we assume that \(p_i\leq M_P\) and \(h_i\leq M_D\) for
every \(i\in\mathcal I\). These conditions are necessary and sufficient for feasibility, since the
requests can always be processed sequentially. Let \(\alpha:=\max_{i\in\mathcal I}p_i/M_P\) and
\(\beta:=\max_{i\in\mathcal I}h_i/M_D\) denote the maximum normalized
prefill and peak decode memory demands, respectively. The feasibility
conditions imply \(\alpha,\beta\leq1\).

\noindent\textbf{System Overview.}
As shown in Figure~\ref{fig:system_model}, we consider a batch of requests released at time \(t=0\), with their characteristics
\((m_i,p_i,d_i)\) known to the scheduler in advance. During execution, each unfinished request belongs to one of four state sets.

\begin{figure}[htp]
	\centering
	\includegraphics[width=0.8\linewidth]{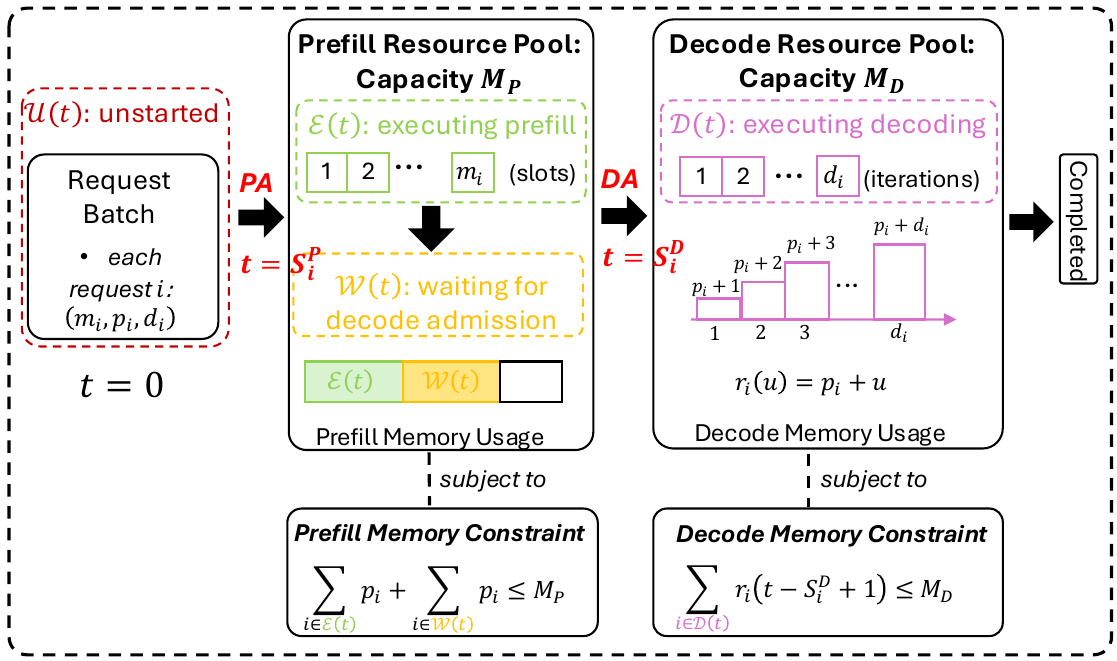}
	\caption{System model for joint prefill and decode admission scheduling.}
	\label{fig:system_model}
\end{figure}

For each slot \([t,t+1)\), let \(\mathcal U(t)\), \(\mathcal E(t)\),
\(\mathcal W(t)\), and \(\mathcal D(t)\) denote, respectively, the sets of
unstarted requests awaiting prefill admission (PA), requests executing
prefill, requests awaiting decode admission (DA), and requests executing
decoding. Let \(S_i^P\) and \(S_i^D\) denote the prefill and decode admission
times of request \(i\), respectively.

\textbf{PA and Prefill Memory Constraint.}
The scheduler selects a feasible subset of unstarted requests for prefill
admission. A request \(i\) admitted at time \(S_i^P\) enters the prefill
pool and belongs to \(\mathcal E(t)\) while executing prefill. It requires
\(m_i\) slots to process prompt tokens in parallel and generate its
prompt KV cache, and it occupies \(p_i\) units of prefill memory. Requests
not admitted remain in \(\mathcal U(t)\).

After completing prefill, request \(i\) is either admitted immediately to
decoding or enters \(\mathcal W(t)\) while awaiting decode admission.
During this waiting period, it retains its prompt KV cache and continues
to occupy \(p_i\) units of prefill memory. Therefore, request \(i\)
occupies \(p_i\) units of prefill memory throughout
\([S_i^P,S_i^D)\). The prefill memory constraint during slot
\([t,t+1)\) is thus
\[
\sum_{i\in\mathcal E(t)} p_i
+
\sum_{i\in\mathcal W(t)} p_i
\leq M_P.
\]
\textbf{DA and Decode Memory Constraint.}
At admission time \(S_i^D\), request \(i\) enters the decode
pool. Its prompt KV cache is transferred from prefill memory to decode
memory, and its prefill memory is released. We model this transfer as
instantaneous and omit transfer time and network bandwidth constraints.
During decoding, request \(i\) belongs to \(\mathcal D(t)\) and completes
one iteration per slot.

After generating \(u\) output tokens, request \(i\) has decode memory
demand \(r_i(u)=p_i+u\). For \(u=0,\ldots,d_i\), this demand grows from
\(p_i\) at decode admission to the peak value
\(h_i=r_i(d_i)=p_i+d_i\). If request \(i\) executes decoding during slot
\([t,t+1)\), it has generated \(t-S_i^D\) output tokens before the slot
and generates one additional token during the slot. Its decode memory
demand therefore reaches \(r_i(t-S_i^D+1)\) during that slot. Hence, the
decode memory constraint is
\[
\sum_{i\in\mathcal D(t)}(p_i+t-S_i^D+1)\leq M_D.
\]
After its \(d_i\)-th decode iteration, request \(i\) leaves the system and
releases its decode memory.

\subsection{DMRS Formulation}
\label{sec:problem_formulation}
Based on the system model above, we formulate joint prefill and decode
admission scheduling as the dual-memory request scheduling (DMRS)
problem, whose objective is to minimize the sum of request completion times.
The optimization problem, denoted by \(\mathbf{(O)}\), is formulated
as follows:
\begin{subequations}
	\label{eq:p0}
	\begin{align}
		\min\quad &
		\sum_{i\in\mathcal I} C_i
		\label{eq:p0_obj}\\
		\text{s.t.}\quad &
		C_i^P=S_i^P+m_i,
		\quad \forall i\in\mathcal I,
		\label{eq:p0_1}\\
		&
		C_i=S_i^D+d_i,
		\quad \forall i\in\mathcal I,
		\label{eq:p0_2}\\
		&
		S_i^D\geq C_i^P,
		\quad \forall i\in\mathcal I,
		\label{eq:p0_3}\\
		&
		\sum_{i\in\mathcal I}
		p_i\,\mathbf{1}\!\left\{S_i^P\leq t<S_i^D\right\}
		\leq M_P,
		\quad \forall t\in\{0,1,2,\ldots\},
		\label{eq:p0_4}\\
		&
		\sum_{i\in\mathcal I}
		\left(p_i+t-S_i^D+1\right)
		\mathbf{1}\!\left\{S_i^D\leq t<C_i\right\}
		\leq M_D,
		\quad \forall t\in\{0,1,2,\ldots\},
		\label{eq:p0_5}\\
		&
		S_i^P,S_i^D\in\mathbb Z_{\geq 0},
		\quad \forall i\in\mathcal I.
		\label{eq:p0_6}
	\end{align}
\end{subequations}
For each request \(i\in\mathcal I\), the scheduler determines its prefill
and decode admission times \(S_i^P\) and \(S_i^D\).
Constraints~(\plaineqref{eq:p0_1}) and~(\plaineqref{eq:p0_2}) define its
prefill completion time and request completion time, respectively.
Constraint~(\plaineqref{eq:p0_3}) ensures that request \(i\) is admitted to
decoding after completing prefill. Constraints~(\plaineqref{eq:p0_4})
and~(\plaineqref{eq:p0_5}) enforce the prefill and decode memory capacities,
respectively, where \(\mathbf{1}\{\cdot\}\) denotes the indicator function.
Constraint~(\plaineqref{eq:p0_6}) restricts admission times to nonnegative
integers.

Since all requests are released at time zero, \(C_i\) is also the end-to-end
latency of request \(i\), i.e., minimizing
\(\sum_{i\in\mathcal I}C_i\) is equivalent to minimizing the average
end-to-end latency. Let \(\mathrm{OPT}\) denote the optimal objective value
of \(\mathbf{(O)}\).

\begin{theorem}
	\label{thm:hardness}
	DMRS is strongly NP-hard, even when every request requires exactly one
	slot of prefill processing and one decode iteration. (The complete proof is given in
	Appendix~\ref{app:proof:thm-hardness}.)
\end{theorem}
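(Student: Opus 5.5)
We would reduce from \textsc{3-Partition}, which is strongly NP-complete. An instance consists of positive integers \(a_1,\ldots,a_{3k}\) and a bound \(B\) with \(\sum_j a_j=kB\) and \(B/4<a_j<B/2\). The question is whether the indices can be split into \(k\) triples, each summing to exactly \(B\). We build a DMRS instance with \(n=3k\) requests, where request \(j\) has \(m_j=1\), \(d_j=1\) and \(p_j=a_j\). The capacities are \(M_P=B\) and \(M_D=B+3\). Every request then meets the feasibility assumptions, since \(p_j<B\) and \(h_j=a_j+1\le B+3\). All numbers in the instance are bounded by a polynomial in the \textsc{3-Partition} numbers, so the reduction preserves strong hardness. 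We ask whether \(\mathrm{OPT}\le K\), where \(K:=3\sum_{t=1}^{k}(t+1)\).

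The key observation concerns what happens when \(d_j=1\). Request \(j\) decodes only during the single slot \([S_j^D,S_j^D+1)\), and in that slot its decode demand is \(p_j+1\). Constraint~(\ref{eq:p0_5}) at slot \(t\) therefore reads \(\sum_{j:S_j^D=t}(a_j+1)\le B+3\).

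Because \(a_j>B/4\), any four requests give \(\sum(a_j+1)>B+4\). Hence at most three requests can decode in any slot. Moreover \(S_j^D\ge S_j^P+1\ge1\), so no request decodes in slot \(0\). It follows that at most three requests complete at each time \(t+1\) for \(t\ge1\). The sum of completion times is therefore at least \(K\), with equality if and only if exactly three requests decode in each slot \(t=1,\ldots,k\).

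For the "only if" direction, suppose \(\sum_j C_j\le K\). By the previous paragraph, exactly three requests decode in each slot \(t\in\{1,\ldots,k\}\). For each such slot the decode constraint gives \(\sum_{j:S_j^D=t}(a_j+1)\le B+3\), that is, \(\sum_{j:S_j^D=t}a_j\le B\). These \(k\) triple sums add up to \(\sum_j a_j=kB\), so every one of them equals \(B\). The triples \(\{j:S_j^D=t\}\) thus form a valid 3-partition.

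For the "if" direction, take a 3-partition \(T_1,\ldots,T_k\). For each \(j\in T_t\), set \(S_j^P=t-1\) and \(S_j^D=t\). The requests in \(T_t\) hold prefill memory exactly during \([t-1,t)\), using \(B=M_P\) units. They decode during slot \(t\), using \(B+3=M_D\) units. No other requests share either slot in the respective pool, so the schedule is feasible. Its objective value is exactly \(K\).

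The reduction is clearly polynomial, which completes the proof of strong NP-hardness. We expect the main obstacle to be the counting lower bound. It must be checked that the decode constraint alone already caps the number of completions per slot at three, and that slot \(0\) is unavailable for decoding. Both facts follow from \(a_j>B/4\) and from constraints~(\ref{eq:p0_1}) and~(\ref{eq:p0_3}). The prefill constraint plays no role in the hardness direction and only needs to hold for the constructed schedule.
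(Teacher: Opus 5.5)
Your proof is correct. It follows the paper's overall template but makes the other memory pool the binding one. The paper also reduces from 3-PARTITION with \(m_i=d_i=1\), \(p_i=a_i\), \(M_P=B\), and the same threshold \(K=3\nu(\nu+3)/2\). However, it sets \(M_D=\sum_i(a_i+1)\), so the decode constraint can never bind.

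In the paper, the counting bound \(C_{(j)}\ge 1+\lceil j/3\rceil\) comes from the prefill side. Requests admitted to prefill at the same time \(t\) all hold prefill memory during slot \(t\), since \(S_i^D\ge S_i^P+1\), so at most three share an admission time. The first \(j\) completions must have prefill admission times in \(\{0,\ldots,C_{(j)}-2\}\). The triples are then the prefill admission classes \(\mathcal A_t=\{i:S_i^P=t\}\).

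You instead make \(M_D=B+3\) tight. You use the fact that with \(d_j=1\) the decode constraint collapses to the per-slot packing condition \(\sum_{j:S_j^D=t}(a_j+1)\le B+3\), and you read the triples off the decode admission times. Your exclusion of slot \(0\) via \(S_j^D\ge S_j^P+1\ge1\) plays the role of the paper's observation that prefill admissions must occur by \(C_{(j)}-2\).

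Your argument is slightly more direct, because the decode constraint already concerns requests sharing a single slot. The paper must first argue that simultaneous prefill admissions overlap in prefill memory.

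The two proofs certify different things. Your lower bound never uses \(M_P\), so any \(M_P\ge B\) works. This shows hardness persists even when the prefill pool is effectively unconstrained, which amounts to single-pool memory-constrained scheduling. The paper's construction shows hardness persists even when decode memory never binds, isolating the prefill memory constraint specific to the disaggregated model. Taken together, the two reductions show that either capacity constraint alone suffices for strong NP-hardness.
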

	
\subsection{LP Relaxation}
\label{sec:relaxation}
For the approximation analysis, we derive an LP relaxation from necessary
conditions satisfied by every feasible schedule. The relaxation captures
the mandatory memory-time loads imposed by each request on the prefill and
decode memory pools. Its optimal value provides a lower bound on
\(\mathrm{OPT}\).

\noindent\textbf{Prefill Memory-Time Load.}
During its \(m_i\) slots of prefill processing, request \(i\) occupies
\(p_i\) units of prefill memory. Its mandatory prefill memory-time load is
\(\ell_i^P:=p_i m_i\).

\noindent\textbf{Decode Memory-Time Load.}
During its \(u\)-th decode iteration, request \(i\) requires \(p_i+u\)
units of decode memory. Its mandatory decode memory-time load is
\(\ell_i^D
:=
\sum_{u=1}^{d_i}(p_i+u)
=
p_i d_i+\frac{d_i(d_i+1)}{2}\).

Let \(\bar C_i\) denote the relaxed completion-time variable associated with
request \(i\). We consider the following linear program, denoted by
\(\mathbf{(LP)}\):
\begin{subequations}
	\label{eq:lp}
	\begin{align}
		\min
		\quad&
		\sum_{i\in\mathcal I}\bar C_i
		\label{eq:lp_obj}\\
		\text{s.t.}\quad&
		\bar C_i\geq m_i+d_i,
		\qquad \forall i\in\mathcal I,
		\label{eq:lp_single}\\
		&
		\sum_{i\in\mathcal S}\ell_i^P\bar C_i
		\geq
		\frac{1}{2M_P}
		\left[
		\left(\sum_{i\in\mathcal S}\ell_i^P\right)^2
		+
		\sum_{i\in\mathcal S}(\ell_i^P)^2
		\right],
		\quad
		\forall\,\emptyset\ne\mathcal S\subseteq\mathcal I,
		\label{eq:lp_prefill_subset}\\
		&
		\sum_{i\in\mathcal S}\ell_i^D\bar C_i
		\geq
		\frac{1}{2M_D}
		\left[
		\left(\sum_{i\in\mathcal S}\ell_i^D\right)^2
		+
		\sum_{i\in\mathcal S}(\ell_i^D)^2
		\right],
		\quad
		\forall\,\emptyset\ne\mathcal S\subseteq\mathcal I.
		\label{eq:lp_decode_subset}
	\end{align}
\end{subequations}

Let \(\mathrm{LP}^{\star}\) denote the optimal value of
\(\mathbf{(LP)}\). Constraint~(\plaineqref{eq:lp_single}) provides a lower bound on each
request's completion time.
Constraints~(\plaineqref{eq:lp_prefill_subset}) and
(\plaineqref{eq:lp_decode_subset}) provide lower bounds derived from the
mandatory prefill and decode memory-time loads of every nonempty subset of
requests, respectively. Lemma~\ref{lem:lp_lower_bound} establishes that these
constraints are satisfied by the completion times of every feasible
schedule.

\begin{lemma}
	\label{lem:lp_lower_bound}
	For every feasible schedule of \(\mathbf{(O)}\), setting
	\(\bar C_i=C_i\) for all \(i\in\mathcal I\) yields a feasible solution to
	\(\mathbf{(LP)}\). Consequently,
	\(\mathrm{LP}^{\star}\leq\mathrm{OPT}\). (The complete proof is given in Appendix~\ref{app:proof:lem-lp_lower_bound}.)
\end{lemma}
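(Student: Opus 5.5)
We check the three families of constraints of \(\mathbf{(LP)}\) separately for \(\bar C_i=C_i\), where the \(C_i\) come from an arbitrary feasible schedule of \(\mathbf{(O)}\). Once that is done, the inequality \(\mathrm{LP}^{\star}\leq\mathrm{OPT}\) follows at once by applying the result to an optimal schedule.

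\textbf{Constraint~(\plaineqref{eq:lp_single}).} This one is immediate. By~(\plaineqref{eq:p0_1})--(\plaineqref{eq:p0_3}) and \(S_i^P\geq0\), we have \(C_i=S_i^D+d_i\geq S_i^P+m_i+d_i\geq m_i+d_i\).

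\textbf{Constraints~(\plaineqref{eq:lp_prefill_subset}) and~(\plaineqref{eq:lp_decode_subset}).} These are handled by one common argument, the classical single-machine/cumulative load bound of Queyranne and Schulz. Fix a pool \(X\in\{P,D\}\) with capacity \(M_X\) and a nonempty \(\mathcal S\subseteq\mathcal I\). For each \(i\) and each slot \(t\), define the mandatory occupancy \(a_i(t)\).
\begin{itemize}
\item For prefill, \(a_i(t)=p_i\) on \([S_i^P,C_i^P)\) and \(0\) otherwise.
\item For decode, \(a_i(t)=p_i+t-S_i^D+1\) on \([S_i^D,C_i)\) and \(0\) otherwise.
\end{itemize}
These satisfy \(\sum_t a_i(t)=\ell_i^X\) and \(a_i(t)=0\) for \(t\geq C_i\). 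Prefill execution occupancy is dominated by~(\plaineqref{eq:p0_4}), since \([S_i^P,C_i^P)\subseteq[S_i^P,S_i^D)\), and decode occupancy is exactly~(\plaineqref{eq:p0_5}). Hence \(\sum_{i\in\mathcal S}a_i(t)\leq M_X\) for every \(t\).

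The key step is to introduce a mean busy time \(\tau_i:=\frac{1}{\ell_i^X}\sum_t a_i(t)\,(t+\tfrac12)\), measured with the continuous convention that slot \(t\) spans \([t,t+1)\). We then show two inequalities.
\begin{enumerate}
\item \(\tau_i\leq C_i-\frac{\ell_i^X}{2M_X}\). Since \(a_i(t)\leq M_X\) and the load vanishes after \(C_i\), the mean is largest when the load \(\ell_i^X\) is packed at full rate \(M_X\) immediately before \(C_i\). That extreme occupies an interval of length \(\ell_i^X/M_X\) ending at \(C_i\), whose midpoint is \(C_i-\ell_i^X/(2M_X)\).
\item \(\sum_{i\in\mathcal S}\ell_i^X\tau_i\geq \frac{1}{2M_X}\bigl(\sum_{i\in\mathcal S}\ell_i^X\bigr)^2\). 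The left side is \(\int t\,A(t)\,dt\) for the aggregate rate \(A(t)=\sum_{i\in\mathcal S}a_i(t)\), which lies in \([0,M_X]\) and has total mass \(L=\sum_{i\in\mathcal S}\ell_i^X\). Among such rate functions, this integral is minimized by running at rate \(M_X\) on \([0,L/M_X]\), which gives exactly \(L^2/(2M_X)\). This is a rearrangement (bathtub) argument.
\end{enumerate}

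Combining the two inequalities gives
\[
\sum_{i\in\mathcal S}\ell_i^X C_i\geq\sum_{i\in\mathcal S}\ell_i^X\tau_i+\sum_{i\in\mathcal S}\frac{(\ell_i^X)^2}{2M_X},
\]
and together with step 2 this is exactly the required subset constraint.

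The main obstacle is step 1 in the discrete-slot setting. We have \(a_i(t)\leq M_X\), but for decode the per-slot demand is not constant, so the packing claim must be checked carefully. It suffices to see that \(\sum_t a_i(t)(t+\frac12)\) is maximized, subject to \(0\leq a_i(t)\leq M_X\), total mass \(\ell_i^X\), and support in \([0,C_i)\), by pushing mass as late as possible. This is a fractional greedy argument, and it gives the bound via the integral of \(x\) over \([C_i-\ell_i^X/M_X,C_i]\).

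We must also confirm that the \(+\frac12\) midpoint convention makes step 2 exact. For this, we view slot \(t\) as spreading its mass uniformly over \([t,t+1)\), so that \(\sum_t a_i(t)(t+\frac12)=\int x\,\tilde a_i(x)\,dx\) with \(\tilde a_i\leq M_X\). Both inequalities are then proved in this continuous form, which handles the discreteness cleanly.
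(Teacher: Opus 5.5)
Your proof is correct, but it takes a different route from the paper's.

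\textbf{The paper's argument.} The paper never introduces mean busy times. It sorts \(\mathcal S\) by completion time, \(C_1\le\cdots\le C_s\). It then observes that the mandatory loads of the first \(j\) requests all lie in \([0,C_j)\), where the pool offers at most \(MC_j\) units of memory-time, so \(L_j:=\sum_{r\le j}\ell_r\le MC_j\). Multiplying by \(\ell_j/M\), summing, and using the identity \(\sum_j\ell_jL_j=\frac12[(\sum_j\ell_j)^2+\sum_j\ell_j^2]\) gives the constraint directly. This uses only the aggregate capacity and the fact that each request's mandatory load is finished by \(C_i\). It needs no per-request bound \(a_i(t)\leq M\), no continuous embedding of slots, and no extremal (bathtub) arguments. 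So it avoids the discreteness issue you had to handle at the end.

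\textbf{What your route buys.} Your decomposition \(C_i-\tau_i\geq\ell_i/(2M)\) together with \(\sum_i\ell_i\tau_i\geq L^2/(2M)\) reaches exactly the same inequality with more machinery, but it is more flexible. It shows that the \(\sum_i\ell_i^2\) term comes from the individual gap between completion and mean busy time. That gap can be bounded more sharply from the known occupancy profile; for prefill, \(C_i-\tau_i\geq d_i+m_i/2\geq\ell_i^P/(2M_P)\). This would yield a tighter valid relaxation if one wanted it.
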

	
\begin{lemma}
	\label{lem:lp_polynomial}
	The linear program \(\mathbf{(LP)}\) can be solved in polynomial time. (The complete proof is given in Appendix~\ref{app:proof:lem-lp_polynomial}.)
\end{lemma}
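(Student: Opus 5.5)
The LP has exponentially many constraints, so I will go through the ellipsoid method with a polynomial-time separation oracle (Grötschel--Lovász--Schrijver). This reduces Lemma~\ref{lem:lp_polynomial} to separating the two families (\plaineqref{eq:lp_prefill_subset}) and (\plaineqref{eq:lp_decode_subset}). The constraints (\plaineqref{eq:lp_single}) are only \(n\) in number and can be checked directly. Each family is the classical Queyranne-type parallel/single-machine polytope constraint with weights \(\ell_i\) and the right-hand side scaled by \(1/M\), so I would adapt Queyranne's separation argument.

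\textbf{Separation step.} Fix a candidate \(\bar C\) and one family, say prefill, with \(\ell_i=\ell_i^P\) and \(M=M_P\). Multiplying by \(2M\), define
\[
f(\mathcal S)=\sum_{i\in\mathcal S}\ell_i\Bigl(2M\bar C_i-\ell_i\Bigr)-\Bigl(\sum_{i\in\mathcal S}\ell_i\Bigr)^2 .
\]
A violated constraint exists iff \(\min_{\mathcal S\neq\emptyset} f(\mathcal S)<0\). The key step is to show that, after sorting requests so that \(\bar C_{1}\le\cdots\le\bar C_{n}\), some minimizer is a prefix \(\{1,\dots,k\}\). The exchange argument I would use is this. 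Suppose \(\mathcal S\) contains \(j\) but not \(i\) with \(\bar C_i\le\bar C_j\). Set \(L=\sum_{k\in\mathcal S}\ell_k\). Adding an element \(i\) changes \(f\) by \(\ell_i(2M\bar C_i-2L-2\ell_i)\), and removing \(j\) changes it by \(-\ell_j(2M\bar C_j-2L+2\ell_j)+\ldots\). The sign of the marginal term \(2M\bar C_i-2L-\ldots\) is monotone in \(\bar C_i\). Hence if an element with large \(\bar C\) has a nonpositive marginal contribution, so does every element with smaller \(\bar C\). Queyranne's standard argument then gives that the minimizer can be taken to be a prefix in \(\bar C\)-order, with ties broken arbitrarily. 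Consequently only the \(n\) prefix sets need checking, each in \(O(n)\) time, or \(O(n\log n)\) time overall with running sums. The decode family is handled identically.

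\textbf{Ellipsoid bookkeeping.} I would note that all coefficients \(\ell_i^P,\ell_i^D\) and right-hand sides are rationals whose encoding length is polynomial in the input size (\(\ell_i^D\) involves \(d_i(d_i+1)/2\), which is fine in binary). The feasible region is nonempty because any sequential schedule is feasible by Lemma~\ref{lem:lp_lower_bound}. The optimum is attained, since the objective is bounded below by \(\sum_i(m_i+d_i)\), and we may add the bound \(\bar C_i\le\) the sequential makespan without loss. Under these conditions GLS equivalence of separation and optimization yields an exact optimal vertex in polynomial time.

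The main obstacle is making the exchange argument airtight. The direct swap inequality must hold for the quadratic supermodular-type set function, and there are edge cases in which \(\ell_i\) are very unequal. If the exchange argument gets messy, my fallback is to observe that \(g(\mathcal S)=\tfrac12[(\sum\ell)^2+\sum\ell^2]\) is supermodular, so \(\sum_{i\in\mathcal S}\ell_i\bar C_i-g(\mathcal S)/M\) is submodular. Its minimization over nonempty sets is then polynomial via general submodular function minimization, handling nonemptiness by fixing one element at a time. This alone suffices for polynomiality even without the prefix structure.
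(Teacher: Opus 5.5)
Your proposal is correct and follows essentially the paper's route: the ellipsoid method, a bounding box on \(\bar C\), and an exact separation oracle that checks only the prefix sets of a sorted order in each pool. The paper sorts by \(a_i=M\bar C_i-\ell_i/2\) rather than by \(\bar C_i\), and its proof supplies the airtight version of the exchange step you flagged: take a minimizer \(\mathcal S^\star\) over all subsets (including \(\emptyset\)) with load \(L\); removing a member \(i\) cannot help, giving \(a_i\le L-\ell_i/2\), i.e.\ \(M\bar C_i\le L\), and adding a nonmember \(j\) cannot help, giving \(a_j\ge L+\ell_j/2\), i.e.\ \(M\bar C_j\ge L+\ell_j\), so \(\mathcal S^\star\) is a prefix in your \(\bar C\)-order as well. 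On the box, your cap \(\bar C_i\le T_0\) is valid but needs a capping argument that uses \(\sum_i\ell_i\le MT_0\), whereas the paper's bound \(\bar C_i\le\mathrm{LP}^{\star}\le nT_0\) is immediate.
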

	
\section{Approximation for Restricted Instances}
\label{sec:app_alg1}
Algorithm~\ref{alg:bounded_scheduling} applies to instances satisfying
\(\alpha<1\) and \(\beta<1\). Hence, every request satisfies
\(M_P-p_i\geq(1-\alpha)M_P\) and \(M_D-h_i\geq(1-\beta)M_D\).
If a request is rejected by the prefill admission rule, the prefill memory
occupied by requests exceeds \((1-\alpha)M_P\). Similarly, if a request
is rejected by the decode admission rule, the peak decode memory reserved by
requests exceeds \((1-\beta)M_D\).

\subsection{Algorithm Design}
Using an optimal LP solution \(\{\bar C_i\}_{i\in\mathcal I}\),
Algorithm~\ref{alg:bounded_scheduling} partitions the requests into dyadic
groups \(\mathcal G_q:=\{i\in\mathcal I:
2^{q-1}<\bar C_i\leq 2^q\}\) for \(q\in\mathbb Z\).

\begin{algorithm}[htp]
	\caption{LP-Guided Dyadic Group Scheduling}
	\label{alg:bounded_scheduling}
	\begin{algorithmic}[1]
		\Require Requests \(\{(m_i,p_i,d_i)\}_{i\in\mathcal I}\) and memory
		capacities \(M_P\) and \(M_D\)
		\Ensure Admission times
		\(\{(S_i^P,S_i^D)\}_{i\in\mathcal I}\)

		\State Solve \(\mathbf{(LP)}\) and obtain an optimal solution
		\(\{\bar C_i\}_{i\in\mathcal I}\); Construct all nonempty dyadic groups \(\mathcal G_q\)
		\State \(t\gets0\);
		\(\mathcal E,\mathcal W,\mathcal D\gets\emptyset\)

		\For{each nonempty \(\mathcal G_q\), in increasing order of \(q\)}
		\State \(\mathcal U_q\gets\mathcal G_q\)

		\While{\(\mathcal U_q\cup\mathcal E\cup\mathcal W\cup\mathcal D
			\neq\emptyset\)}

		\State Remove every request \(i\in\mathcal D\) satisfying
		\(S_i^D+d_i=t\)
		\State Move every request \(i\in\mathcal E\) satisfying
		\(S_i^P+m_i=t\) to \(\mathcal W\)

		% \Statex \quad \(\triangleright\) \textit{Decode Admission with Peak Reservation (Lines 8--13)}
		\State \(\widehat{\mathcal W}\gets\mathcal W\)
		\For{each \(i\in\widehat{\mathcal W}\), in nondecreasing order of
			\(m_i+d_i\)}
		\If{\(\sum_{j\in\mathcal D}h_j+h_i\leq M_D\)}
		\Comment{reserve peak decode demands}
		\State \(S_i^D\gets t\); \(\mathcal W\gets\mathcal W\setminus\{i\}\);
		\(\mathcal D\gets\mathcal D\cup\{i\}\)
		\EndIf
		\EndFor

		% \Statex \quad \(\triangleright\) \textit{Prefill Admission Rule (Lines 14--19)}
		\State \(\widehat{\mathcal U}_q\gets\mathcal U_q\)
		\For{each \(i\in\widehat{\mathcal U}_q\), in nondecreasing order of
			\(m_i+d_i\)}
		\If{\(\sum_{j\in\mathcal E}p_j+
			\sum_{j\in\mathcal W}p_j+p_i\leq M_P\)}
		\State \(S_i^P\gets t\); \(\mathcal U_q\gets\mathcal U_q\setminus\{i\}\);
		\(\mathcal E\gets\mathcal E\cup\{i\}\)
		\EndIf
		\EndFor

		\If{\(\mathcal U_q\cup\mathcal E\cup\mathcal W\cup\mathcal D
			\neq\emptyset\)}
		\State \(t\gets
		\min\left(
		\{S_i^P+m_i:i\in\mathcal E\}
		\cup
		\{S_i^D+d_i:i\in\mathcal D\}
		\right)\)
		\EndIf
		\EndWhile
		\EndFor
	\end{algorithmic}
\end{algorithm}

The nonempty groups are processed in increasing order of \(q\). No request
from a later group is admitted before all requests in the current group have
completed. In each group, we order the requests by nondecreasing \(m_i+d_i\), with ties broken by
request index.

At each event time, Algorithm~\ref{alg:bounded_scheduling} performs decode
admission before prefill admission. It reserves \(h_i\) for each request
executing decoding until completion, while its actual decode memory demand
satisfies \(r_i(u)\leq h_i\) for \(0\leq u\leq d_i\). Appendix~\ref{app:alg1-pseudocode} provides a line-by-line
description of Algorithm~\ref{alg:bounded_scheduling}.

\subsection{Approximation Guarantee}
The analysis partitions the slots before each dyadic prefix completes into
four cases: a request remains waiting for decode admission, a request remains
unadmitted to prefill, or the current group is in its final prefill or decode
tail. Failed admissions yield lower bounds on reserved decode memory or
occupied prefill memory, allowing the LP prefix load bounds to control the
first two cases; the reserved decode memory-time load is less than twice the
mandatory decode load. The dyadic group bounds control the two tails. The
lemmas and charging argument appear in Appendix~\ref{app:alg1-proofs}.

\begin{theorem}
	\label{thm:alg1_ratio}
	For every instance satisfying \(\alpha<1\) and \(\beta<1\),
	Algorithm~\ref{alg:bounded_scheduling} is a deterministic polynomial-time
	\(\left(8+\frac{4}{1-\alpha}+\frac{8}{1-\beta}\right)\)-approximation.
\end{theorem}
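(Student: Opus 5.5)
Write \(\mathcal P_q:=\bigcup_{q'\le q}\mathcal G_{q'}\) for the dyadic prefix and \(T_q\) for the time at which Algorithm~\ref{alg:bounded_scheduling} finishes the last request of \(\mathcal P_q\). Since groups are processed sequentially, every \(i\in\mathcal G_q\) has \(C_i\le T_q\). The plan is to prove \(T_q\le c\cdot 2^q\) with \(c=8+\frac{4}{1-\alpha}+\frac{8}{1-\beta}\) divided by \(2\) up to the dyadic slack. Because \(\bar C_i>2^{q-1}\) for \(i\in\mathcal G_q\), we then get \(C_i\le T_q\le 2c'\,\bar C_i\) with \(2c'=c\). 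Summing over \(i\) and invoking Lemma~\ref{lem:lp_lower_bound} gives \(\sum_i C_i\le c\,\mathrm{LP}^\star\le c\,\mathrm{OPT}\). Polynomial time follows from Lemma~\ref{lem:lp_polynomial}, the fact that there are at most \(n\) nonempty groups, and the fact that each event loop has \(O(n)\) event times, each handled in polynomial time.

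\textbf{LP prefix bounds.} Let \(L^P_q=\sum_{i\in\mathcal P_q}\ell_i^P\) and \(L^D_q=\sum_{i\in\mathcal P_q}\ell_i^D\). Apply constraint~(\plaineqref{eq:lp_prefill_subset}) to \(\mathcal S=\mathcal P_q\), using \(\bar C_i\le 2^q\) on the left-hand side. This gives \(2^qL^P_q\ge (L^P_q)^2/(2M_P)\), so \(L^P_q\le 2^{q+1}M_P\). In the same way, constraint~(\plaineqref{eq:lp_decode_subset}) gives \(L^D_q\le 2^{q+1}M_D\). Constraint~(\plaineqref{eq:lp_single}) gives \(m_i+d_i\le 2^q\) for every \(i\in\mathcal P_q\).

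\textbf{Charging the slots of group \(\mathcal G_q\).} Write \(T_q-T_{q-1}\) for the duration of the group's phase. I would classify each slot in \([T_{q-1},T_q)\) by the state at the most recent event time.

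\emph{(a)} Some request in \(\mathcal W\) was rejected by decode admission. Then the reserved peak demands satisfy \(\sum_{j\in\mathcal D}h_j>M_D-h_i\ge(1-\beta)M_D\).

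\emph{(b)} \(\mathcal W=\emptyset\) but some request in \(\mathcal U_q\) was rejected by prefill admission. Then the occupied prefill memory exceeds \((1-\alpha)M_P\). Since \(\mathcal W=\emptyset\), this memory is occupied only by requests actively executing prefill.

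\emph{(c)/(d)} \(\mathcal U_q=\emptyset\) and \(\mathcal W=\emptyset\). Then only tails remain. Each remaining request is either still in prefill, after which it is admitted to decode or waits in a type-(a) slot, or it is decoding. The tails therefore last at most \(\max_i m_i+\max_i d_i\le 2\cdot 2^q\), or \(2^q\) each. This accounting needs care: once \(\mathcal U_q=\emptyset\), a slot with nonempty \(\mathcal W\) is type (a), and otherwise its length is bounded by the remaining \(m\) or \(d\) of a single request.

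Type-(b) slots: the active prefill memory-time integrated over them is at most \(L^P_q\). Hence their number is at most \(L^P_q/((1-\alpha)M_P)\le 2^{q+1}/(1-\alpha)\).

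Type-(a) slots: the reserved decode memory-time integrated over them is at most \(\sum_{i\in\mathcal G_q} h_id_i\). Now \(h_id_i=p_id_i+d_i^2\le 2\ell_i^D\), because \(d_i^2\le d_i(d_i+1)\). The number of type-(a) slots is therefore at most \(2L^D_q/((1-\beta)M_D)\le 2^{q+2}/(1-\beta)\).

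\textbf{Combining.} Summing over groups \(q'\le q\), the per-group bounds form geometric series dominated by twice their top terms. This gives
\[
T_q\le 2\Bigl(2\cdot 2^q+\tfrac{2^{q+1}}{1-\alpha}+\tfrac{2^{q+2}}{1-\beta}\Bigr)=2^{q}\Bigl(4+\tfrac{4}{1-\alpha}+\tfrac{8}{1-\beta}\Bigr).
\]
Alternatively, the bounds can be applied directly to the whole prefix \(\mathcal P_q\), since the load bounds are already cumulative; only the tail terms then need the geometric sum. Multiplying by the factor \(2\) from \(\bar C_i>2^{q-1}\) gives the ratio. I expect the constants to need adjustment: applying the load bounds cumulatively while summing the tails geometrically should land exactly on \(8+\frac{4}{1-\alpha}+\frac{8}{1-\beta}\).

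\textbf{Main obstacle.} The main obstacle is making the slot classification exhaustive and correct, since state changes only at event times. In particular, I must verify that a slot in which a request is unadmitted while \(\mathcal W\neq\emptyset\) is charged to (a) and not to (b). Prefill memory held by waiting requests is not covered by \(\ell^P\), so it cannot be charged to the prefill load. I must also verify that the decode-first ordering guarantees that a waiting request really was rejected against the current \(\mathcal D\).
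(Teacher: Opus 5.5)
Your proposal is correct and, in its cumulative form, is essentially the paper's proof: the same partition of \([0,T_k)\) into slots with \(\mathcal W\neq\emptyset\), slots with \(\mathcal W=\emptyset\) but \(\mathcal U_q\neq\emptyset\), and per-group prefill and decode tails of at most \(2^q\) slots each, charged against the prefix LP load bounds and \(h_id_i\le 2\ell_i^D\), giving \(T_k\le\left(4+\frac{2}{1-\alpha}+\frac{4}{1-\beta}\right)2^k\) and then the factor \(2\) from \(2^k<2\bar C_i\). Only this cumulative accounting, with the loads bounded once over the whole prefix and the geometric sum used only for the tails, yields the stated constant; your displayed per-group-then-geometric combination would give only \(8+\frac{8}{1-\alpha}+\frac{16}{1-\beta}\).
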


\begin{proof}
	For each \(k\in\mathbb Z\), define the dyadic prefix
	\(\mathcal G_{\leq k}:=\bigcup_{q\leq k}\mathcal G_q\).
	For \(k\) such that \(\mathcal G_{\leq k}\neq\emptyset\), let \(T_k\)
	denote the time at which requests in \(\mathcal G_{\leq k}\) have
	completed.
	Lemma~\ref{lem:alg1_prefix_load} bounds the mandatory prefill and decode
	memory-time loads of this prefix by \(2M_P2^k\) and \(2M_D2^k\),
	respectively. Lemma~\ref{lem:alg1_decode_reservation} shows that the reserved decode
	memory-time load is less than twice the mandatory decode load. Using
	these bounds, the charging argument in
	Lemma~\ref{lem:alg1_prefix_completion} shows that requests in
	\(\mathcal G_{\leq k}\) complete by time
	\[
	T_k
	\leq
	\left(
	4+\frac{2}{1-\alpha}+\frac{4}{1-\beta}
	\right)2^k.
	\]

	For any request \(i\in\mathcal G_k\), the group definition gives
	\(2^{k-1}<\bar C_i\), and hence \(2^k<2\bar C_i\). Since \(C_i\leq T_k\),
	\[
	C_i
	\leq
	\left(
	8+\frac{4}{1-\alpha}+\frac{8}{1-\beta}
	\right)\bar C_i.
	\]
	Summing over all requests and applying
	Lemma~\ref{lem:lp_lower_bound}, which gives
	\(\sum_i\bar C_i=\mathrm{LP}^{\star}\leq\mathrm{OPT}\), proves the stated
	approximation ratio.

	By Lemma~\ref{lem:lp_polynomial}, \(\mathbf{(LP)}\) can be solved in
	polynomial time. The grouping, sorting, and event-driven scheduling
	are also polynomial. Hence,
	Algorithm~\ref{alg:bounded_scheduling} is deterministic and runs in
	polynomial time.
\end{proof}

\begin{corollary}
	\label{cor:alg1_half_capacity}
If \(\alpha,\beta\leq1/2\), equivalently, if \(p_i\leq M_P/2\) and
\(h_i\leq M_D/2\) for every \(i\in\mathcal I\), then
Algorithm~\ref{alg:bounded_scheduling} is a \(32\)-approximation.
More generally, the approximation bound tends to \(20\) as
\(\alpha,\beta\to0\) and diverges as either parameter approaches \(1\)
from below.
\end{corollary}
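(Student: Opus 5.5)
The corollary follows by substituting into the ratio of Theorem~\ref{thm:alg1_ratio}; no new scheduling argument is needed. Write the bound as
\[
f(\alpha,\beta):=8+\frac{4}{1-\alpha}+\frac{8}{1-\beta},
\]
defined on \([0,1)^2\). The first step is to check the claimed equivalence. By definition \(\alpha=\max_i p_i/M_P\) and \(\beta=\max_i h_i/M_D\), so \(\alpha\leq1/2\) holds if and only if \(p_i\leq M_P/2\) for every \(i\), and likewise \(\beta\leq 1/2\) holds if and only if \(h_i\leq M_D/2\) for every \(i\). In particular, both conditions give \(\alpha,\beta<1\), so Theorem~\ref{thm:alg1_ratio} applies.

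Next I would note that \(f\) is increasing in each argument on \([0,1)\), because \(x\mapsto 1/(1-x)\) is increasing there. So for \(\alpha,\beta\leq1/2\) we have
\[
f(\alpha,\beta)\leq f(1/2,1/2)=8+8+16=32.
\]
Since Theorem~\ref{thm:alg1_ratio} gives \(\sum_i C_i\leq f(\alpha,\beta)\,\mathrm{OPT}\), this yields the \(32\)-approximation.

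For the limiting statements, \(f\) is continuous on \([0,1)^2\), so as \(\alpha,\beta\to0\) we get \(f(\alpha,\beta)\to 8+4+8=20\). If \(\alpha\to1^-\), the term \(4/(1-\alpha)\to\infty\) while the other terms stay nonnegative, so \(f\) diverges. The case \(\beta\to1^-\) is symmetric, using the term \(8/(1-\beta)\).

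None of these steps is a real obstacle, since everything reduces to monotonicity and continuity of \(1/(1-x)\). The one point to handle carefully is to state that "diverges" refers to the upper bound \(f\) given by the theorem, not to the algorithm's actual ratio.
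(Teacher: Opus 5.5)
Your proposal is correct and follows the paper's route. The paper gives no separate proof: the corollary is read off directly from the ratio \(8+\frac{4}{1-\alpha}+\frac{8}{1-\beta}\) of Theorem~\ref{thm:alg1_ratio}, exactly as you do via monotonicity (giving \(32\) at \(\alpha=\beta=1/2\)), the limit \(20\) at \(\alpha,\beta\to0\), and divergence of the bound as either parameter tends to \(1\).
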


\section{Approximation for General Instances}
\label{sec:alg2}
To obtain a uniform guarantee, we develop a \(36\)-approximation for every
feasible instance, including the boundary cases \(\alpha=1\) or \(\beta=1\).
Algorithm~\ref{alg:general_scheduling} combines a classification based on
prefill and peak decode memory demands with geometric grouping based on an
optimal LP solution.

\subsection{Request Classification}
We classify each request by its prefill and peak decode memory demands:
\[
\begin{aligned}
	\mathcal L_P
	&:=
	\left\{i\in\mathcal I:p_i>\frac{M_P}{2}\right\},
	&
	\mathcal S_P
	&:=
	\left\{i\in\mathcal I:p_i\leq\frac{M_P}{2}\right\},\\
	\mathcal L_D
	&:=
	\left\{i\in\mathcal I:h_i>\frac{M_D}{2}\right\},
	&
	\mathcal S_D
	&:=
	\left\{i\in\mathcal I:h_i\leq\frac{M_D}{2}\right\}.
\end{aligned}
\]
Requests in \(\mathcal L_P\) and \(\mathcal S_P\) are called
\emph{prefill-large} and \emph{prefill-small}, while those in
\(\mathcal L_D\) and \(\mathcal S_D\) are called \emph{decode-large} and
\emph{decode-small}, respectively.

The sets \(\mathcal C^{LL}:=\mathcal L_P\cap\mathcal L_D\),
\(\mathcal C^{SL}:=\mathcal S_P\cap\mathcal L_D\),
\(\mathcal C^{LS}:=\mathcal L_P\cap\mathcal S_D\), and
\(\mathcal C^{SS}:=\mathcal S_P\cap\mathcal S_D\) partition
\(\mathcal I\). In each superscript, the first and second letters indicate
the prefill and decode classifications, respectively. The strict \(1/2\)
threshold ensures that at most one prefill-large request belongs to
\(\mathcal E(t)\cup\mathcal W(t)\) and, under peak decode reservation,
at most one decode-large request belongs to \(\mathcal D(t)\).

\subsection{Algorithm Design}
Fix a geometric grouping ratio \(\rho>1\), let \(B_q=\rho^q\), and define
\(\mathcal H_q:=\{i\in\mathcal I:
B_q/\rho<\bar C_i\leq B_q\}\) for \(q\in\mathbb Z\).

Algorithm~\ref{alg:general_scheduling} processes the nonempty groups in
increasing order of \(q\), completing each group before starting the next.
Within each group \(\mathcal H_q\), it processes
\(\mathcal H_q\cap\mathcal C^{LL}\),
\(\mathcal H_q\cap\mathcal C^{SL}\),
\(\mathcal H_q\cap\mathcal C^{LS}\), and
\(\mathcal H_q\cap\mathcal C^{SS}\) in this order. The first subset is
processed sequentially, while the other three use event-driven admission
subject to both memory capacities. Within each subset, requests are ordered
by nondecreasing \(m_i+d_i\), with ties broken by request index.
For decode admission, the algorithm reserves \(h_j\) for every request
\(j\in\mathcal D\) until that request completes. Thus, decode admission
requires \(\sum_{j\in\mathcal D}h_j+h_i\leq M_D\) and is not based on the
current decode memory usage. Appendix~\ref{app:alg2-pseudocode} provides a detailed line-by-line
description of Algorithm~\ref{alg:general_scheduling}.

\begin{algorithm}[htp]
	\caption{LP-Guided Geometric Group Scheduling}
	\label{alg:general_scheduling}
	\begin{algorithmic}[1]
		\Require Requests \(\{(m_i,p_i,d_i)\}_{i\in\mathcal I}\), memory capacities
		\(M_P,M_D\), and grouping ratio \(\rho>1\)
		\Ensure Admission times
		\(\{(S_i^P,S_i^D)\}_{i\in\mathcal I}\)

		\State Solve \(\mathbf{(LP)}\) and obtain an optimal solution
		\(\{\bar C_i\}_{i\in\mathcal I}\); Construct the classes
		\(\mathcal C^{LL},\mathcal C^{SL},\mathcal C^{LS},\mathcal C^{SS}\)
		\State Construct all nonempty geometric groups \(\mathcal H_q\)
		\State \(t\gets0\);
		\(\mathcal E,\mathcal W,\mathcal D\gets\emptyset\)

		\For{each nonempty \(\mathcal H_q\), in increasing order of \(q\)}

		\For{each \(i\in\mathcal H_q\cap\mathcal C^{LL}\), in nondecreasing
			order of \(m_i+d_i\)}
		\State \(S_i^P\gets t\);
		\(S_i^D\gets t+m_i\); \(t\gets t+m_i+d_i\)
		\EndFor

		\For{each \(\mathcal C\) in
			\((\mathcal C^{SL},\mathcal C^{LS},\mathcal C^{SS})\)}
		\State
		\(\mathcal U_{q,\mathcal C}
		\gets\mathcal H_q\cap\mathcal C\)

		\While{
			\(\mathcal U_{q,\mathcal C}
			\cup\mathcal E\cup\mathcal W\cup\mathcal D
			\neq\emptyset\)}

		\State Remove every \(i\in\mathcal D\) satisfying
		\(S_i^D+d_i=t\)
		\State Move every \(i\in\mathcal E\) satisfying
		\(S_i^P+m_i=t\) to \(\mathcal W\)

		%\Statex\qquad \(\triangleright\)\ \textit{Decode Admission with Peak Reservation (Lines~13--18)}
		\State \(\widehat{\mathcal W}\gets\mathcal W\)
		\For{each \(i\in\widehat{\mathcal W}\), in nondecreasing
			order of \(m_i+d_i\)}
		\If{\(\sum_{j\in\mathcal D}h_j+h_i\leq M_D\)}
		\Comment{reserve peak decode demands}
		\State \(S_i^D\gets t\);
		\(\mathcal W\gets\mathcal W\setminus\{i\}\);
		\(\mathcal D\gets\mathcal D\cup\{i\}\)
		\EndIf
		\EndFor

		%\Statex\qquad \(\triangleright\)\  \textit{Prefill Admission Rule (Lines~19--24)}
		\State
		\(\widehat{\mathcal U}_{q,\mathcal C}
		\gets\mathcal U_{q,\mathcal C}\)
		\For{each \(i\in\widehat{\mathcal U}_{q,\mathcal C}\),
			in nondecreasing order of \(m_i+d_i\)}
		\If{
			\(\sum_{j\in\mathcal E}p_j
			+\sum_{j\in\mathcal W}p_j+p_i
			\leq M_P\)}
		\State \(S_i^P\gets t\); \(\mathcal U_{q,\mathcal C}
		\gets\mathcal U_{q,\mathcal C}\setminus\{i\}\);
		\(\mathcal E\gets\mathcal E\cup\{i\}\)
		\EndIf
		\EndFor

		\If{
			\(\mathcal U_{q,\mathcal C}
			\cup\mathcal E\cup\mathcal W\cup\mathcal D
			\neq\emptyset\)}
		\State
		\(t\gets
		\min\bigl(
		\{S_i^P+m_i:i\in\mathcal E\}
		\cup
		\{S_i^D+d_i:i\in\mathcal D\}
		\bigr)\)
		\EndIf

		\EndWhile
		\EndFor
		\EndFor
	\end{algorithmic}
\end{algorithm}

\subsection{Approximation Guarantee}
The proof assigns each execution slot to one of three types. The slot either
processes a request that is large in the relevant memory pool, occurs while a
small request is blocked by the corresponding admission constraint, or
belongs to the final prefill or decode tail of a class phase. The LP bounds
on the cumulative loads of each geometric prefix, together with the peak
reservation bound, control the first two types. Geometric grouping bounds the
accumulated tails. 

Appendix~\ref{app:alg2-proofs} gives the complete slot
classification and charging argument.

\begin{theorem}
	\label{thm:alg2_ratio}
	With \(\rho=3/2\), Algorithm~\ref{alg:general_scheduling} is a
	deterministic polynomial-time \(36\)-approximation for every feasible DMRS
	instance.
\end{theorem}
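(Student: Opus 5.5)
The plan follows the proof of Theorem~\ref{thm:alg1_ratio}. We fix a geometric prefix \(\mathcal H_{\leq k}:=\bigcup_{q\leq k}\mathcal H_q\) and let \(T_k\) be the time at which Algorithm~\ref{alg:general_scheduling} finishes the last request of \(\mathcal H_{\leq k}\). The goal is a bound \(T_k\leq c\,B_k\) with \(c\rho=36\), that is, \(c=24\) for \(\rho=3/2\). For \(i\in\mathcal H_k\) we have \(B_k<\rho\bar C_i\), so \(C_i\leq T_k\leq 36\,\bar C_i\). Summing over all requests and invoking Lemma~\ref{lem:lp_lower_bound} then gives the ratio. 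Polynomial running time follows from Lemma~\ref{lem:lp_polynomial}, exactly as before.

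\textbf{Prefix load bounds.} Apply constraints~(\plaineqref{eq:lp_prefill_subset}) and~(\plaineqref{eq:lp_decode_subset}) with \(\mathcal S=\mathcal H_{\leq k}\), and use \(\bar C_i\leq B_k\) on that set. This gives
\[
\sum_{i\in\mathcal H_{\leq k}}\ell_i^P\leq 2M_PB_k,
\qquad
\sum_{i\in\mathcal H_{\leq k}}\ell_i^D\leq 2M_DB_k.
\]
Two further facts are needed. First, peak reservation costs at most a factor two: \(h_id_i\leq 2\ell_i^D\), since \(\ell_i^D\geq d_i(p_i+(d_i+1)/2)\geq d_ih_i/2\). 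Second, constraint~(\plaineqref{eq:lp_single}) gives \(m_i+d_i\leq\bar C_i\leq B_q\) for every \(i\in\mathcal H_q\).

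\textbf{Slot classification on \([0,T_k)\).} Each slot is charged to exactly one type, checked in a fixed priority order.
\begin{enumerate}
\item The slot executes the prefill of a prefill-large request. Since \(p_i>M_P/2\), we have \(m_i<2\ell_i^P/M_P\), so these slots total at most \(4B_k\). The \(1/2\) threshold guarantees that at most one such request is in \(\mathcal E\cup\mathcal W\) at any time.
\item The slot executes the decode of a decode-large request. Since \(h_i>M_D/2\), we have \(d_i<4\ell_i^D/M_D\), so these slots total at most \(8B_k\).
\item The slot has a decode-small request waiting in \(\mathcal W\) and rejected by the reservation test. Then \(\sum_{j\in\mathcal D}h_j>M_D/2\), so by the factor-two reservation bound these slots total at most \(2\cdot2\cdot 2B_k=8B_k\).
\item The slot has a prefill-small request rejected at prefill admission. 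Then \(\sum_{\mathcal E\cup\mathcal W}p_j>M_P/2\).
\item All remaining slots form the final prefill or decode tail of some class phase of some group \(q\leq k\). Each such tail has length at most \(B_q\), and there are a constant number of them per group. Summing the geometric series gives a total of at most \(\text{const}\cdot B_k\rho/(\rho-1)=\text{const}\cdot 3B_k\).
\end{enumerate}

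\textbf{Main obstacle: type 4.} The prefill occupancy in type 4 includes requests sitting in \(\mathcal W\), and \(\mathcal W\)-time is not mandatory prefill load. I would split type 4 by where the occupied prefill memory sits.
\begin{itemize}
\item If at least \(M_P/4\) of it lies in \(\mathcal E\), charge the slot to the prefill load, giving at most \(8B_k\).
\item Otherwise, some request in \(\mathcal W\) is blocked at decode admission. Either that slot is already covered by types 2 or 3, or \(\mathcal W\) is nonempty only during the final decode tail of a phase, because decode admission runs first at every event.
\end{itemize}
The crux is reconciling the constants. The first four types alone already approach \(24B_k\). So I expect to need the priority order, plus the fact that the \(\mathcal C^{LL}\) block is strictly sequential, to show that several of these charges overlap rather than add. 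Alternatively, the proof may need sharper charges, such as using \(h_i>M_D/2\) with reservation only for the decode-small classes. The final step is to choose the tail count so that \(24\cdot(3/2)=36\) comes out exactly.
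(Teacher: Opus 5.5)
Your skeleton matches the paper's: prefix load bounds, the factor-two reservation bound, slot charging, geometric tails, and the target \(T_k\le 24B_k\). However, the proposal stops exactly where the argument has to be made, and it is missing the two devices that make it close.

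\textbf{First gap: the blocked-prefill case.} You leave it unresolved, and neither route you sketch works.
\begin{itemize}
\item Your split at \(M_P/4\) would charge up to \(8B_k\) to the prefill load, which does not fit the budget.
\item The alternative, that \(\mathcal W\) is nonempty only in a final decode tail, is false. \(\mathcal W(t)\neq\emptyset\) can persist through most of a phase when the decode pool is congested.
\end{itemize}
The fix is to classify slots separately inside each class phase, to give \(\mathcal W(t)\neq\emptyset\) priority for a decode category, and to exploit class homogeneity.
\begin{itemize}
\item In a \(\mathcal C^{SL}\) phase every request is decode-large. So \(\mathcal W(t)\neq\emptyset\) forces \(\mathcal D(t)\neq\emptyset\), because a waiting request would fit in an empty decode pool. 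The slot is therefore a large-decode slot.
\item In \(\mathcal C^{LS}\) and \(\mathcal C^{SS}\) phases every request is decode-small. So a waiting request certifies \(\sum_{j\in\mathcal D(t)}h_j>M_D/2\) with \(\mathcal D(t)\subseteq\mathcal S_D\).
\end{itemize}
Hence a blocked-small-prefill slot, which arises only in \(\mathcal C^{SL}\) and \(\mathcal C^{SS}\) phases, has \(\mathcal W(t)=\emptyset\). The blocking occupancy \(\sum_{j\in\mathcal E(t)}p_j>M_P/2\) is then pure mandatory load of prefill-small requests.

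Class homogeneity is also needed just to make your types exhaustive. Consider a decode-large request waiting behind only decode-small decoders. Such a slot need not fit any of your types 1--4, and it is not a tail. It is excluded only because the classes run in separate phases.

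\textbf{Second gap: the constants.} As set up, your bounds cannot reach \(24B_k\), because you charge each category against the whole prefix load (\(4+8+8+\cdots\)). The missing observation is that, within each pool, the large and blocked-small categories draw on disjoint request sets. Write \(\mathcal T_{PL},\mathcal T_{PS}\) for the large-prefill and blocked-small-prefill slots. Then
\begin{itemize}
\item \(|\mathcal T_{PL}|\le\sum_{i\in\mathcal H_{\leq k}\cap\mathcal L_P}m_i<\frac{2}{M_P}\sum_{i\in\mathcal H_{\leq k}\cap\mathcal L_P}\ell_i^P\);
\item \(|\mathcal T_{PS}|\le\frac{2}{M_P}\sum_{i\in\mathcal H_{\leq k}\cap\mathcal S_P}\ell_i^P\), since \(\mathcal E(t)\subseteq\mathcal S_P\) in those slots.
\end{itemize}
Together these give at most \(\frac{2}{M_P}\sum_{i\in\mathcal H_{\leq k}}\ell_i^P\le4B_k\). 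By the same reasoning, the large-decode and blocked-small-decode counts sum to at most \(\frac{4}{M_D}\sum_{i\in\mathcal H_{\leq k}}\ell_i^D\le8B_k\).

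There are exactly four tails per group: the prefill tail of \(\mathcal C^{SL}\), the decode tail of \(\mathcal C^{LS}\), and both tails of \(\mathcal C^{SS}\). Each has length at most \(B_q\), so together they contribute \(\frac{4\rho}{\rho-1}B_k=12B_k\) at \(\rho=3/2\).

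This yields \(T_k\le\bigl(12+\frac{4\rho}{\rho-1}\bigr)B_k=24B_k\). The ratio \(36\) then follows because \(12\rho+\frac{4\rho^2}{\rho-1}-36=\frac{4(2\rho-3)^2}{\rho-1}\), which vanishes at \(\rho=3/2\).
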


\begin{proof}
	For any nonempty group \(\mathcal H_k\), define
	\(\mathcal H_{\leq k}:=\bigcup_{q\leq k}\mathcal H_q\) and let
	\(T_k:=\max_{i\in\mathcal H_{\leq k}}C_i\).
	Lemma~\ref{lem:alg2_prefix_load} bounds the mandatory prefill and decode
	memory-time loads of this prefix by \(2M_PB_k\) and \(2M_DB_k\),
	respectively. Using these bounds, the charging argument in
	Lemma~\ref{lem:alg2_prefix_completion} gives
	\(T_k
	\leq
	\left(12+\frac{4\rho}{\rho-1}\right)B_k\). 
	
	For every \(i\in\mathcal H_k\), the group definition gives
	\(B_k<\rho\bar C_i\), while \(C_i\leq T_k\). Therefore,
	\(C_i
	<
	\left(12\rho+\frac{4\rho^2}{\rho-1}\right)\bar C_i\).
	Let
	\(	f(\rho):=12\rho+\frac{4\rho^2}{\rho-1}\). For every \(\rho>1\),
	\(f(\rho)-36
	=
	\frac{4(2\rho-3)^2}{\rho-1}
	\geq0\),
	with equality at \(\rho=3/2\). Hence, by
	Lemma~\ref{lem:lp_lower_bound},
	\(	\sum_{i\in\mathcal I}C_i
	<
	36\sum_{i\in\mathcal I}\bar C_i
	=
	36\,\mathrm{LP}^{\star}
	\leq
	36\,\mathrm{OPT}\).
	
Lemma~\ref{lem:lp_polynomial}, together with the deterministic
polynomial-time grouping and scheduling steps, implies that
Algorithm~\ref{alg:general_scheduling} is deterministic and
polynomial-time.
\end{proof}

\section{Experiments}
\label{sec:experiments}
\noindent\textbf{CPU Evaluation.}
We compare the objective values of Algorithms~\ref{alg:bounded_scheduling}
and~\ref{alg:general_scheduling} with the exact optimum computed by the ILP in Appendix~\ref{app:ilp_formulation} and the LP lower bound
\(\mathrm{LP}^{\star}\) defined in Section~\ref{sec:relaxation}.

\noindent\textbf{GPU Evaluation.}
We compare the total completion time (TCT) of Algorithms~\ref{alg:bounded_scheduling}
and~\ref{alg:general_scheduling} with three vLLM
baselines: disaggregated prefill and decode serving (P/D), colocated
serving (Colocated), and colocated serving with chunked prefill
(Colocated Chunked).

\subsection{CPU Evaluation}
\label{sec:cpu_experiments}

\noindent\textbf{Setup.}
We implement Algorithms~\ref{alg:bounded_scheduling}
and~\ref{alg:general_scheduling} in Python~3.12.12 and solve the ILP and LP
using Gurobi~13.0.1 on an ARM64 macOS machine. We consider synthetic
instances and conversation and code workloads derived from the Azure LLM
Inference Trace 2023~\citep{splitwise}.
\begin{table}[htp]
	\centering
	\caption{ALG/OPT and ALG/LP for \(\kappa=2\), where
		\(M_P=\lceil \kappa\max_i p_i\rceil\) and
		\(M_D=\lceil \kappa\max_i h_i\rceil\).}
	\label{tab:cpu_results}
	\small
	\setlength{\tabcolsep}{3pt}
	\begin{tabular}{@{}lccc@{}}
		\toprule
		Workload & Scale
		& Algorithm~\ref{alg:bounded_scheduling}
		& Algorithm~\ref{alg:general_scheduling} \\
		\midrule
		Synthetic, ALG/OPT   & \(n=10\)  & \textbf{1.660} & \textbf{1.945} \\
		Synthetic, ALG/LP    & \(n=250\) & \textbf{1.749} & \textbf{1.938} \\
		Conversation, ALG/LP & \(n=250\) & \textbf{1.491} & \textbf{1.685} \\
		Code, ALG/LP         & \(n=250\) & \textbf{1.695} & \textbf{2.109} \\
		\bottomrule
	\end{tabular}
\end{table}

\noindent\textbf{Results.}
Algorithm~\ref{alg:bounded_scheduling} has a lower ratio than
Algorithm~\ref{alg:general_scheduling} in each reported setting.
All ratios are below the worst-case guarantees.
Further scaling and capacity results appear in Appendix~\ref{app:experiments}.

\subsection{GPU Evaluation}
\label{sec:gpu_experiments}
\noindent\textbf{Setup.}
We conduct serving experiments on two NVIDIA H100 GPUs using Qwen2.5
(7B)~\citep{yang2024qwen25} and Llama~3.1
(8B)~\citep{grattafiori2024llama}. We use separate offline batches of
conversation and code requests derived from the Azure LLM Inference Trace
2023~\citep{splitwise}.

\begin{table}[htp]
	\centering
	\caption{TCT normalized by the corresponding P/D result for \(n=250\) and
		\(\kappa=1.5\).}
	\label{tab:gpu_results}
	\small
	\setlength{\tabcolsep}{3pt}
	\begin{tabular}{@{}llccccc@{}}
		\toprule
		Model & Workload & P/D & Colocated
		& \shortstack{Colocated Chunked}
		& Algorithm~\ref{alg:bounded_scheduling}
		& Algorithm~\ref{alg:general_scheduling} \\
		\midrule
		Qwen2.5 (7B)    & Conversation & 1.000 & 0.759 & 1.349 & \textbf{0.787} & \textbf{0.878} \\
		Qwen2.5 (7B)    & Code         & 1.000 & 0.840 & 1.259 & \textbf{0.525} & \textbf{0.582} \\
		Llama~3.1 (8B) & Conversation & 1.000 & 0.738 & 1.239 & \textbf{0.758} & \textbf{0.839} \\
		Llama~3.1 (8B) & Code         & 1.000 & 0.835 & 1.227 & \textbf{0.510} & \textbf{0.565} \\
		\bottomrule
	\end{tabular}
\end{table}

Algorithms~\ref{alg:bounded_scheduling} and
\ref{alg:general_scheduling}, as well as the vLLM~\citep{vllm} P/D baseline, assign one GPU to prefill and the second to decoding. Colocated runs a single vLLM instance using tensor parallelism across both GPUs, so both GPUs execute prefill and decoding. Colocated Chunked uses the same configuration with chunked prefill enabled.

\noindent\textbf{Results.}
Both algorithms reduce TCT relative to P/D for each model and workload
shown in Table~\ref{tab:gpu_results}. The reductions range from
21.3\% to 49.0\% for Algorithm~\ref{alg:bounded_scheduling} and from
12.2\% to 43.5\% for Algorithm~\ref{alg:general_scheduling}. Both
achieve larger reductions on the code workloads. Additional memory
capacities, prefill and decode demands, and batch sizes are evaluated in
Appendix~\ref{app:experiments}.

\section{Related Work}
\label{sec:related_work}
\noindent\textbf{LLM Inference Serving and KV Cache Management.}
Orca~\citep{orca} introduces iteration-level scheduling,
vLLM~\citep{vllm} manages KV caches with PagedAttention, and
Sarathi~\citep{sarathi} uses chunked prefills.
FastServe~\citep{wu2023fast} uses preemptive scheduling,
AlpaServe~\citep{li2023alpaserve} multiplexes models through model
parallelism, and FlexGen~\citep{sheng2023flexgen} offloads model state.
DeepSpeed-Inference~\citep{aminabadi2022deepspeed} optimizes parallel
execution, while DeepSpeed-FastGen~\citep{holmes2024deepspeed} uses
Dynamic SplitFuse. SGLang~\citep{zheng2024sglang},
Preble~\citep{srivatsa2025preble}, and CacheGen~\citep{liu2024cachegen}
address prefix reuse, distributed scheduling, and KV cache compression
and streaming, respectively. These systems do not model joint admission
under separate prefill and decode memory capacities.

\noindent\textbf{Prefill-Decode Disaggregated Serving.}
Splitwise~\citep{splitwise} separately provisions prefill and decoding,
DistServe~\citep{distserve} optimizes phase-specific parallelism, and
TetriInfer~\citep{tetriInfer} combines disaggregation with prompt chunking.
Mooncake~\citep{mooncake} and MemServe~\citep{memserve} provide disaggregated
KV cache storage and reuse; SmartGen~\citep{smartgen} studies selective
cache transfer, and HexGen-2~\citep{hexgen2} targets heterogeneous resources.
TaiChi~\citep{taichi}, Arrow~\citep{wu2025arrow}, and
P/D-Serve~\citep{pdserve} adapt serving decisions to workloads and system
conditions. These works do not provide approximation guarantees for
joint admission scheduling.

\noindent\textbf{Theoretical Scheduling Models.}
\citet{jaillet2025online} and \citet{ao2025fluid} study online scheduling
under a unified memory constraint, while \citet{wang2025llm} consider
heterogeneous prefill and decode lengths under a shared cache capacity.
\citet{feng2026competitive} derive competitive and constant-factor
approximation guarantees for non-clairvoyant and clairvoyant scheduling,
respectively, under a single KV cache capacity. Their subsequent work
develops a general non-clairvoyant scheduling
framework~\citep{feng2026regime}.
\citet{chen2025adaptively} consider uncertain output lengths.
\citet{bari2025optimal} study routing and throughput-oriented scheduling,
while \citet{dai2025throughput} analyze the throughput optimality of
work-conserving scheduling. These models do not capture retained prompt
caches across separate prefill and decode memory pools.

\section{Conclusion and Future Work}
\label{sec:conclusion}
We formulated DMRS for disaggregated LLM inference, jointly scheduling prefill
and decode admissions under separate memory capacities while accounting for
prompt KV caches retained in prefill memory and growing decode KV caches. We
proved strong NP-hardness, derived an LP lower bound, and developed two
deterministic polynomial-time approximation algorithms. The first achieves an approximation ratio of
\(8+\frac{4}{1-\alpha}+\frac{8}{1-\beta}\) for
\(\alpha,\beta<1\), while the second provides a uniform
\(36\)-approximation for every feasible instance, including instances
with \(\alpha=1\) or \(\beta=1\).

The CPU experiments show that both algorithms perform well relative to exact ILP solutions and LP lower bounds. Experiments on two H100 GPUs show that both algorithms reduce total
completion time (TCT) relative to vLLM P/D across all tested settings
on code workloads.

\noindent\textbf{Limitations.}
We assume that output token counts are known in advance, consistent
with the theoretical settings studied by \citet{wang2025llm}
and \citet{jaillet2025online}. This assumption allows us to focus on
the coupling between prefill and decode admissions under separate
memory capacities. The theoretical model assumes instantaneous
KV transfer, whereas the completion times measured in our GPU
experiments include KV transfer overhead.

\noindent\textbf{Future Work.}
We can extend DMRS in three directions. An online setting could
address requests arriving over time and establish competitive
guarantees. A non-clairvoyant setting could remove the assumption
that output lengths are known in advance. A deadline-constrained
setting could maximize the total weight of requests completed
by their deadlines.

\section*{AI Use Statement}
Generative AI tools were used to assist with language editing, LaTeX
organization, literature search and reference formatting, and the review and
presentation of the theoretical formulation, algorithms, mathematical claims,
proofs, experimental methodology, and results. The authors independently
reviewed and verified all AI-assisted content, including the cited sources,
derivations, algorithms, proofs, experimental configurations, code, and
reported results. The authors take full responsibility for the final content
of this work.

\section*{Reproducibility Statement}
The model assumptions and algorithms are stated in
Sections~\ref{model_formulation}--\ref{sec:alg2}. Complete proofs of all
theoretical claims, the notation table, and an exact time-indexed ILP
formulation are provided in the appendix. The supplementary CPU and GPU
artifacts provide the fixed configurations, trace processing procedures,
implementation and validation code, frozen results, and scripts used to
reproduce the reported experimental summaries.

\clearpage
\addcontentsline{toc}{section}{References}

\nocite{schrijver2000combinatorial} % Preserve the submitted bibliography on rebuild.
\bibliographystyle{iclr2027_conference}
\bibliography{iclr2027_conference}

\appendix
\clearpage
\setcounter{tocdepth}{2}
\tableofcontents
\clearpage
\addtocontents{toc}{\protect\setcounter{tocdepth}{1}}

\clearpage
\section{Notation}
\label{app:notation}
The notation used throughout the paper is summarized in
Table~\ref{tab:notation}.
\begin{table}[H]
	\centering
	\caption{Summary of notation.}
	\label{tab:notation}
	\small
	\begin{tabular}{@{}lp{0.76\linewidth}@{}}
		\toprule
		Symbol & Definition \\
		\midrule
		\(\mathcal I=\{1,\ldots,n\}\)
		& The set of LLM inference requests \\
		\(m_i\)
		& The effective prefill processing time of request \(i\) \\
		\(p_i\)
		& The number of input tokens (i.e., prefill memory demand) of request \(i\) \\
		\(d_i\)
		& The number of output tokens (i.e., decode iterations) of request \(i\) \\
		\(h_i=p_i+d_i\)
		& The peak decode memory demand of request \(i\) \\
		\(r_i(u)=p_i+u\)
		& The decode memory demand after request \(i\) generates \(u\) output tokens \\
		\(M_P,M_D\)
		& The prefill and decode memory capacities \\
		\(S_i^P,S_i^D\)
		& The prefill and decode admission times of request \(i\) \\
		\(C_i^P,C_i\)
		& The prefill and request completion times of request \(i\) \\
		\(\mathcal U(t)\)
		& The set of requests not yet admitted to prefill during slot \([t,t+1)\) \\
		\(\mathcal E(t)\)
		& The set of requests executing prefill during slot \([t,t+1)\) \\
		\(\mathcal W(t)\)
		& The set of requests waiting for decode admission during slot \([t,t+1)\) \\
		\(\mathcal D(t)\)
		& The set of requests executing decoding during slot \([t,t+1)\) \\
		\(\ell_i^P,\ell_i^D\)
		& The mandatory prefill and decode memory-time loads of request \(i\) \\
		\(\alpha,\beta\)
		& The maximum normalized prefill and peak decode memory demands, respectively \\
		\(\bar C_i\)
		& The relaxed completion-time variable of request \(i\) \\
		\(\mathrm{LP}^{\star},\mathrm{OPT}\)
		& The optimal objective values of \(\mathbf{(LP)}\) and
		\(\mathbf{(O)}\), respectively \\
		\(\mathcal G_q,\mathcal G_{\leq k}\)
		& The dyadic group and prefix in Algorithm~\ref{alg:bounded_scheduling} \\
		\(\mathcal L_P,\mathcal S_P\)
		& The prefill-large and prefill-small request sets \\
		\(\mathcal L_D,\mathcal S_D\)
		& The decode-large and decode-small request sets \\
		\(\mathcal C^{LL},\mathcal C^{SL},\mathcal C^{LS},\mathcal C^{SS}\)
		& The four memory-demand classes \\
		\(\rho,B_q\)
		& The geometric grouping ratio \(\rho\) and boundary \(B_q=\rho^q\) \\
		\(\mathcal H_q,\mathcal H_{\leq k}\)
		& The geometric group and prefix in Algorithm~\ref{alg:general_scheduling} \\
		\bottomrule
	\end{tabular}
\end{table}

\section{Proof of Strong NP-Hardness}
\label{app:proof:thm-hardness}
\begin{proof}
	We reduce from the strongly NP-complete 3-PARTITION
	problem~\citep{garey1979computers}. Given \(3\nu\) positive integers
	\(a_1,\ldots,a_{3\nu}\) satisfying
	\(\sum_{i=1}^{3\nu}a_i=\nu B\) and \(B/4<a_i<B/2\) for every \(i\),
	3-PARTITION asks whether they can be partitioned into \(\nu\) disjoint
	triples, each summing to \(B\).
	
	Construct a DMRS instance with \(\mathcal I=\{1,\ldots,3\nu\}\). For each
	\(a_i\), define a request \(i\) with \(m_i=d_i=1\) and \(p_i=a_i\). Set
	\(M_P=B\), \(M_D=\sum_{i=1}^{3\nu}(a_i+1)=\nu B+3\nu\), and
	\(K=3\nu(\nu+3)/2\). Since
	\(\sum_{i\in\mathcal I}h_i=M_D\), the decode memory constraint never
	prevents decode admission.
	
	Suppose that the 3-PARTITION instance has a valid partition, with its
	triples indexed by \(r=1,\ldots,\nu\). Admit the three requests in triple
	\(r\) to prefill at time \(r-1\) and to decoding at time \(r\). Their total
	prefill memory demand is \(B\). At time \(r\), their decode admission
	releases this prefill memory, allowing the next triple, if any, to be
	admitted to prefill at the same time. The requests in triple \(r\) complete
	at time \(r+1\). Hence, the resulting schedule is feasible and has objective
	value
	\(\sum_{i\in\mathcal I}C_i
	=3\sum_{r=1}^{\nu}(r+1)=3\nu(\nu+3)/2=K\).
	
	Conversely, suppose that the constructed DMRS instance has a feasible
	schedule satisfying \(\sum_{i\in\mathcal I}C_i\leq K\). Let
	\(C_{(1)}\leq\cdots\leq C_{(3\nu)}\) denote the completion times in
	nondecreasing order. Since \(p_i>B/4\), at most three requests can be
	admitted to prefill at any slot boundary. The first \(j\) completed requests
	must have been admitted to prefill by time \(C_{(j)}-2\), because both
	phases require one slot. Their prefill admission times therefore lie in
	\(\{0,\ldots,C_{(j)}-2\}\), which contains \(C_{(j)}-1\) slot boundaries.
	Hence, \(j\leq3(C_{(j)}-1)\), so
	\(C_{(j)}\geq1+\lceil j/3\rceil\). Consequently,
	\[
	\sum_{i\in\mathcal I}C_i
	=
	\sum_{j=1}^{3\nu}C_{(j)}
	\geq
	\sum_{j=1}^{3\nu}
	\left(1+\left\lceil\frac{j}{3}\right\rceil\right)
	=
	K.
	\]
	Thus, equality holds throughout, and in particular
	\(C_{(3\nu)}=\nu+1\). Since \(m_i=d_i=1\), every request satisfies
	\(S_i^P\leq C_i-2\leq C_{(3\nu)}-2=\nu-1\).
	
	For \(t=0,\ldots,\nu-1\), define
	\(\mathcal A_t=\{i\in\mathcal I:S_i^P=t\}\). These sets partition
	\(\mathcal I\), and prefill memory feasibility gives
	\(\sum_{i\in\mathcal A_t}p_i\leq B\). Moreover,
	\[
	\sum_{t=0}^{\nu-1}\sum_{i\in\mathcal A_t}p_i
	=
	\sum_{i\in\mathcal I}p_i
	=
	\nu B.
	\]
	Therefore, \(\sum_{i\in\mathcal A_t}p_i=B\) for every
	\(t=0,\ldots,\nu-1\). Since \(B/4<p_i<B/2\), each \(\mathcal A_t\)
	contains exactly three requests. The sets
	\(\mathcal A_0,\ldots,\mathcal A_{\nu-1}\) therefore yield a valid
	3-partition.
	
	The reduction is computable in polynomial time, and its numerical
	parameters remain polynomially bounded under the strongly NP-complete
	restriction of 3-PARTITION. Therefore, DMRS is strongly NP-hard.
\end{proof}

\section{Proofs for the LP Relaxation}
\label{app:lp-proofs}
\subsection{Proof of Lemma~\ref{lem:lp_lower_bound}}
\label{app:proof:lem-lp_lower_bound}
\begin{proof}
Constraint~(\plaineqref{eq:lp_single}) follows from
\(C_i=S_i^D+d_i\geq S_i^P+m_i+d_i\geq m_i+d_i\), where the two
inequalities follow from \(S_i^D\geq S_i^P+m_i\) and \(S_i^P\geq0\),
respectively.

Consider a nonempty subset \(\mathcal S\subseteq\mathcal I\) and either
memory pool. Let \(M\) denote the capacity of the selected pool, and let
\(\ell_i\) denote the corresponding mandatory memory-time load. Reindex the
	requests in \(\mathcal S\) as \(1,\ldots,s\), where
	\(s=|\mathcal S|\), such that \(C_1\leq\cdots\leq C_s\). For each
	\(j=1,\ldots,s\), define \(L_j=\sum_{r=1}^{j}\ell_r\).
	
	For the prefill pool, the mandatory load of request \(r\) is completed by
	\(C_r^P\leq C_r\). For the decode pool, its mandatory load is completed by
	\(C_r\). Therefore, by time \(C_j\), the selected pool must have accommodated
	the total load \(L_j\) of requests \(1,\ldots,j\). Since the pool has capacity
	\(M\) in each slot, it can accommodate at most \(MC_j\) units of memory-time
	load by that time. Hence, \(MC_j\geq L_j\).
	
	Dividing this inequality by \(M\), multiplying it by \(\ell_j\), and summing
	over \(j=1,\ldots,s\) gives
	\[
	\sum_{j=1}^{s}\ell_j C_j
	\geq
	\frac{1}{M}\sum_{j=1}^{s}\ell_j L_j
	=
	\frac{1}{2M}
	\left[
	\left(\sum_{j=1}^{s}\ell_j\right)^2
	+
	\sum_{j=1}^{s}\ell_j^2
	\right].
	\]
	The equality follows from
	\(L_j=\sum_{r=1}^{j}\ell_r\).
	
	Taking \((\ell_i,M)=(\ell_i^P,M_P)\) proves
	Constraint~(\plaineqref{eq:lp_prefill_subset}), while taking
	\((\ell_i,M)=(\ell_i^D,M_D)\) proves
	Constraint~(\plaineqref{eq:lp_decode_subset}). Thus, every feasible schedule of
	\(\mathbf{(O)}\) induces a feasible solution to \(\mathbf{(LP)}\) with
	\(\bar C_i=C_i\) for all \(i\in\mathcal I\). Therefore,
	\(\mathrm{LP}^{\star}\leq\mathrm{OPT}\).
\end{proof}

\subsection{Proof of Lemma~\ref{lem:lp_polynomial}}
\label{app:proof:lem-lp_polynomial}
\begin{proof}
	Although \(\mathbf{(LP)}\) contains exponentially many subset constraints,
	they admit a polynomial-time separation oracle. Given a candidate vector
	\(\bar{\mathbf C}=(\bar C_i)_{i\in\mathcal I}\), consider either memory pool.
	Let \(M\) denote its capacity and \(\ell_i\) the corresponding mandatory
	memory-time load. For each \(\mathcal S\subseteq\mathcal I\), define
	\[
	F(\mathcal S)
	:=
	2M\sum_{i\in\mathcal S}\ell_i\bar C_i
	-
	\left(\sum_{i\in\mathcal S}\ell_i\right)^2
	-
	\sum_{i\in\mathcal S}\ell_i^2.
	\]
	For every nonempty \(\mathcal S\), the corresponding subset constraint is
	violated if and only if \(F(\mathcal S)<0\).
	
	An exact separation oracle follows by sorting. Set
	\(a_i=M\bar C_i-\ell_i/2\) and
	\(L(\mathcal S)=\sum_{i\in\mathcal S}\ell_i\). Then
	\(F(\mathcal S)/2=\sum_{i\in\mathcal S}\ell_i a_i
	-L(\mathcal S)^2/2\). Let \(\mathcal S^\star\) minimize \(F\) over all
	subsets and write \(L=L(\mathcal S^\star)\). Removing any
	\(i\in\mathcal S^\star\) cannot improve the minimum, so
	\(a_i\leq L-\ell_i/2\). Likewise, adding any
	\(j\notin\mathcal S^\star\) cannot improve it, so
	\(a_j\geq L+\ell_j/2\). Since all loads are positive, every member of
	\(\mathcal S^\star\) precedes every nonmember when requests are sorted by
	\(a_i\). Thus, some minimizing subset is a prefix of this order. Sorting
	and checking all prefixes, including the empty prefix with value zero,
	takes \(O(n\log n)\) time per pool. A negative minimum identifies a
	violated subset constraint; a zero minimum certifies all subset
	constraints for that pool. Checking both pools and
	Constraint~(\plaineqref{eq:lp_single}) gives a polynomial-time
	separation oracle.
	
	It remains to bound the search region. Let
	\(T_0:=\sum_{i\in\mathcal I}(m_i+d_i)\). Consider a schedule that processes
	the requests one at a time and completes both phases of each request before
	starting the next. By the feasibility
	conditions stated in Section~\ref{sec:system_model}, this schedule is
	feasible, and every request completes by time \(T_0\). By
	Lemma~\ref{lem:lp_lower_bound}, the corresponding completion-time vector is
	feasible for \(\mathbf{(LP)}\) and has objective value at most \(nT_0\).
	Therefore, \(\mathrm{LP}^{\star}\leq nT_0\). Since
	\(\bar C_i\geq m_i+d_i>0\), every optimal solution satisfies
	\[
	0\leq\bar C_i\leq nT_0,
	\qquad i\in\mathcal I.
	\]
	These bounds have encoding length polynomial in the input size and may be
	added to \(\mathbf{(LP)}\) without changing its optimal value. They can also
	be separated directly. The equivalence of optimization and separation
	therefore implies that \(\mathbf{(LP)}\) can be solved in polynomial time
	using the ellipsoid method~\citep{groetschel1993geometric}.
\end{proof}

\paragraph{LP Implementation.}
In the CPU experiments, we solve \(\mathbf{(LP)}\) using Gurobi and
an iterative cutting plane procedure. We first solve the LP without
the subset constraints. We then apply the exact prefix separator to
both memory pools. At each iteration, we add up to 100 previously
unseen prefix constraints with the largest violations for each pool.
We solve the updated LP and repeat this process until no checked
prefix constraint has a relative violation exceeding \(10^{-8}\).

For the GPU experiments, we use HiGHS to solve \(\mathbf{(LP)}\). The initial LP
includes the subset constraints for each individual request and for
the full request set in both memory pools. We then add violated
prefix constraints iteratively.

\section{Detailed Algorithm Descriptions}
\label{app:algorithm-descriptions}

\subsection{Description of Algorithm~\ref{alg:bounded_scheduling}}
\label{app:alg1-pseudocode}
Line~1 solves the LP relaxation, obtains an optimal solution
\(\{\bar C_i\}_{i\in\mathcal I}\), and constructs the nonempty dyadic groups.
Line~2 initializes the current time and the active request sets.

Lines~3--24 process the nonempty groups in increasing order of \(q\). For
each group \(\mathcal G_q\), Line~4 initializes
\(\mathcal U_q=\mathcal G_q\). Lines~5--23 process the current group until
\(\mathcal U_q\), \(\mathcal E\), \(\mathcal W\), and \(\mathcal D\) are all
empty. At each event time \(t\), Line~6 removes the requests that have
completed decoding from \(\mathcal D\), and Line~7 moves the requests that
have completed prefill from \(\mathcal E\) to \(\mathcal W\).

Lines~8--13 implement the decode admission rule. Line~8 creates a copy
\(\widehat{\mathcal W}\) of the current waiting set. Lines~9--13 consider
the requests in \(\widehat{\mathcal W}\) in nondecreasing order of
\(m_i+d_i\), with ties broken by request index. Line~10 checks the peak
reservation condition: request \(i\) is admitted only if its peak demand \(h_i\),
together with the peak demands \(h_j\) reserved by all active requests
\(j\in\mathcal D\), does not exceed \(M_D\). Because each request remains in
\(\mathcal D\) until its decode completion event at Line~6, its peak demand
remains reserved throughout decoding. Upon admission,
Line~11 sets \(S_i^D=t\), removes \(i\) from \(\mathcal W\), and adds it to
\(\mathcal D\).

Lines~14--19 implement the prefill admission rule. Line~14 creates a copy
\(\widehat{\mathcal U}_q\) of the current unstarted set. Lines~15--19 consider
the requests in \(\widehat{\mathcal U}_q\) in nondecreasing order of
\(m_i+d_i\), with ties broken by request index. A request \(i\) is admitted to
prefill if its demand \(p_i\), together with the prefill memory occupied by
the requests in \(\mathcal E\) and \(\mathcal W\), does not exceed \(M_P\).
Upon admission, Line~17 sets \(S_i^P=t\), removes \(i\) from
\(\mathcal U_q\), and adds it to \(\mathcal E\).

If the current group remains unfinished after both admission rules,
Lines~20--22 advance \(t\) to the next prefill or decode completion time.

\subsection{Description of Algorithm~\ref{alg:general_scheduling}}
\label{app:alg2-pseudocode}
\label{app:alg2-description}

Line~1 solves \(\mathbf{(LP)}\), obtains an optimal solution, and constructs
the four request classes. Line~2 forms the nonempty geometric groups, and
Line~3 sets \(t=0\) and initializes
\(\mathcal E\), \(\mathcal W\), and \(\mathcal D\) as empty sets.

Line~4 processes the nonempty groups in increasing order of \(q\).
Lines~5--7 process the requests in
\(\mathcal H_q\cap\mathcal C^{LL}\) one at a time, in nondecreasing order of
\(m_i+d_i\). For each request, Line~6 sets its admission times and advances
\(t\) to its completion time.

Lines~8--29 process the remaining classes in the order
\(\mathcal C^{SL}\), \(\mathcal C^{LS}\), and \(\mathcal C^{SS}\).
Line~9 collects the unstarted requests in the current group and class, and
the loop in Lines~10--28 continues until all of them have completed. At each
event time, Lines~11--12 remove completed decode requests and move completed
prefill requests to \(\mathcal W\).

Lines~13--18 implement the decode admission rule. Line~13 creates a copy
\(\widehat{\mathcal W}\) of the current waiting set. Lines~14--18 consider
the requests in \(\widehat{\mathcal W}\) in nondecreasing order of
\(m_i+d_i\), with ties broken by request index. Line~15 checks the peak
reservation condition: request \(i\) is admitted only if
\(\sum_{j\in\mathcal D}h_j+h_i\leq M_D\). Because each request remains in
\(\mathcal D\) until its decode completion event at Line~11, its peak demand
remains reserved throughout decoding. Upon admission, Line~16 sets its decode
admission time and moves it from \(\mathcal W\) to \(\mathcal D\).

Lines~19--24 implement the prefill admission rule. Line~19 creates a copy
\(\widehat{\mathcal U}_{q,\mathcal C}\) of the unstarted requests in the
current group and class. Lines~20--24 consider these requests in
nondecreasing order of \(m_i+d_i\), with ties broken by request index. A
request is admitted when its prefill memory demand can be accommodated together
with the memory occupied by the requests in \(\mathcal E\) and \(\mathcal W\).
Upon admission, Line~22 sets its prefill admission time and moves it from
\(\mathcal U_{q,\mathcal C}\) to \(\mathcal E\).

If the current class remains unfinished, Lines~25--27 advance \(t\) to the
next prefill or decode completion time.

\section{Proofs for Algorithm~\ref{alg:bounded_scheduling}}
\label{app:alg1-proofs}
We first bound the cumulative memory-time loads of the dyadic prefixes
defined in the proof of Theorem~\ref{thm:alg1_ratio}.

\begin{lemma}
	For each \(k\in\mathbb Z\), the total prefill and decode memory-time loads
	of the prefix \(\mathcal G_{\leq k}\) satisfy
	\(\sum_{i\in\mathcal G_{\leq k}}\ell_i^P\le 2M_P\,2^k\) and
	\(\sum_{i\in\mathcal G_{\leq k}}\ell_i^D\le 2M_D\,2^k\), respectively.
	\label{lem:alg1_prefix_load}
\end{lemma}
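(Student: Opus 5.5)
We apply the LP subset constraints (\plaineqref{eq:lp_prefill_subset}) and (\plaineqref{eq:lp_decode_subset}) with \(\mathcal S=\mathcal G_{\leq k}\). The argument is the same for both pools, so we write \(\ell_i\) and \(M\) for the load and capacity of either pool. If \(\mathcal G_{\leq k}=\emptyset\), the claim is trivial because the left-hand sums vanish. Otherwise, set \(\mathcal S=\mathcal G_{\leq k}\) and \(L:=\sum_{i\in\mathcal S}\ell_i>0\).

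First, upper bound the left-hand side of the subset constraint. Every \(i\in\mathcal G_{\leq k}\) belongs to some \(\mathcal G_q\) with \(q\leq k\), so \(\bar C_i\leq 2^q\leq 2^k\). Since \(\ell_i>0\), this gives
\[
\sum_{i\in\mathcal S}\ell_i\bar C_i\leq 2^k L .
\]

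Second, lower bound the right-hand side. The optimal LP solution \(\{\bar C_i\}\) is feasible, so the subset constraint for \(\mathcal S\) holds. Dropping the nonnegative term \(\sum_{i\in\mathcal S}\ell_i^2\) yields
\[
\sum_{i\in\mathcal S}\ell_i\bar C_i\geq \frac{1}{2M}\Bigl[L^2+\sum_{i\in\mathcal S}\ell_i^2\Bigr]\geq\frac{L^2}{2M}.
\]

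Combining the two bounds gives \(L^2/(2M)\leq 2^kL\). Dividing by \(L>0\) gives \(L\leq 2M\,2^k\). Taking \((\ell_i,M)=(\ell_i^P,M_P)\) and \((\ell_i^D,M_D)\) gives the two claimed inequalities.

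There is no real obstacle. The only points to check are that the dyadic membership gives \(\bar C_i\leq 2^k\) uniformly over the whole prefix, and that the LP solution used by Algorithm~\ref{alg:bounded_scheduling} satisfies the subset constraints exactly. The latter holds because it is an optimal, hence feasible, solution of \(\mathbf{(LP)}\), obtained via Lemma~\ref{lem:lp_polynomial}.
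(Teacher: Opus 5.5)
Your proof is correct and follows the paper's argument: you bound \(\sum_{i\in\mathcal G_{\leq k}}\ell_i\bar C_i\) above by \(2^k\) times the total load using \(\bar C_i\leq 2^k\) on the prefix, bound it below by applying the subset constraint to \(\mathcal G_{\leq k}\) with the \(\sum_i\ell_i^2\) term dropped, and divide by the positive total load. The only difference is presentational: you handle both pools at once with a generic \((\ell_i,M)\), whereas the paper writes out the prefill case and repeats it for decode.
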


\begin{proof}
	If \(\mathcal G_{\leq k}=\emptyset\), the claim is immediate. Otherwise,
	each \(i\in\mathcal G_{\leq k}\) belongs to some
	\(\mathcal G_q\) with \(q\le k\), and hence
	\(\bar C_i\le 2^q\le 2^k\). Therefore,
	\[
	\sum_{i\in\mathcal G_{\leq k}}
	\ell_i^P\bar C_i
	\le
	2^k
	\sum_{i\in\mathcal G_{\leq k}}\ell_i^P.
	\]
	Applying Constraint~(\plaineqref{eq:lp_prefill_subset}) to
	\(\mathcal G_{\leq k}\) and dropping the nonnegative term
	\(\sum_{i\in\mathcal G_{\leq k}}(\ell_i^P)^2\) gives
	\[
	\sum_{i\in\mathcal G_{\leq k}}
	\ell_i^P\bar C_i
	\ge
	\frac{1}{2M_P}
	\left(
	\sum_{i\in\mathcal G_{\leq k}}\ell_i^P
	\right)^2.
	\]
	Combining the two
	inequalities yields
	\[
	\sum_{i\in\mathcal G_{\leq k}}\ell_i^P
	\le
	2M_P\,2^k.
	\]
	Applying the same argument to
	Constraint~(\plaineqref{eq:lp_decode_subset}) gives
	\[
	\sum_{i\in\mathcal G_{\leq k}}\ell_i^D
	\le
	2M_D\,2^k,
	\]
	which completes the proof.
\end{proof}
	
Algorithm~\ref{alg:bounded_scheduling} reserves the peak decode memory demand
\(h_i\) throughout the \(d_i\) decode slots of request \(i\).
Lemma~\ref{lem:alg1_decode_reservation} shows that this reserved memory-time
load is less than twice the mandatory decode memory-time load captured by the
LP.

\begin{lemma}
	For each request \(i\in\mathcal I\), its reserved decode memory-time
	\(h_i d_i\) satisfies
	\(\ell_i^D\le h_i d_i<2\ell_i^D\). Consequently, for every
	\(k\in\mathbb Z\),
	\(\sum_{i\in\mathcal G_{\leq k}}h_i d_i<4M_D\,2^k\).
	\label{lem:alg1_decode_reservation}
\end{lemma}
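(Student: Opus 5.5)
The proof splits into a per-request inequality and a summation step that reuses Lemma~\ref{lem:alg1_prefix_load}.

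\textbf{Per-request bounds.} Fix \(i\in\mathcal I\). For the lower bound, every term in \(\ell_i^D=\sum_{u=1}^{d_i}(p_i+u)\) satisfies \(p_i+u\le p_i+d_i=h_i\). Summing the \(d_i\) terms gives \(\ell_i^D\le h_i d_i\).

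For the strict upper bound, I would compute
\[
2\ell_i^D-h_i d_i=2p_i d_i+d_i(d_i+1)-(p_i+d_i)d_i=p_i d_i+d_i.
\]
This quantity is strictly positive because \(p_i,d_i\ge1\) are positive integers. Hence \(h_i d_i<2\ell_i^D\). Equivalently, one can pair the terms \(p_i+u\) and \(p_i+d_i+1-u\), each pair summing to \(2p_i+d_i+1>h_i\). The algebraic identity is shorter, so I would use it.

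\textbf{Prefix bound.} Fix \(k\in\mathbb Z\). If \(\mathcal G_{\le k}=\emptyset\), the left-hand side is \(0<4M_D2^k\), so the claim holds. Otherwise, I sum the strict per-request inequality over the nonempty finite set \(\mathcal G_{\le k}\); the sum of strict inequalities stays strict. This gives
\[
\sum_{i\in\mathcal G_{\le k}}h_i d_i<2\sum_{i\in\mathcal G_{\le k}}\ell_i^D.
\]
The decode part of Lemma~\ref{lem:alg1_prefix_load} bounds the right-hand side by \(2\cdot 2M_D2^k=4M_D2^k\).

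There is no substantive obstacle. The only points needing care are that strictness relies on \(p_i\ge1\) or \(d_i\ge1\), both guaranteed by \((m_i,p_i,d_i)\in\mathbb Z_{>0}^3\), and that the empty-prefix case is handled separately.
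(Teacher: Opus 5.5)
Your proof is correct and essentially the same as the paper's. The paper likewise computes \(2\ell_i^D-h_id_i=d_i(p_i+1)>0\), handles the empty prefix separately, and sums the strict per-request bound before applying the decode part of Lemma~\ref{lem:alg1_prefix_load}; it differs only in deriving the lower bound from the identity \(h_id_i-\ell_i^D=d_i(d_i-1)/2\geq0\) rather than your termwise comparison (and strictness really hinges on \(d_i\geq1\), since the gap is \(d_i(p_i+1)\), not on ``\(p_i\geq1\) or \(d_i\geq1\)'', though both hold here).
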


\begin{proof}
	Using \(h_i=p_i+d_i\) and
	\(\ell_i^D=p_i d_i+d_i(d_i+1)/2\), we have
	\[
	h_i d_i-\ell_i^D
	=
	\frac{d_i(d_i-1)}{2}
	\ge 0,
	\qquad
	2\ell_i^D-h_i d_i
	=
	d_i(p_i+1)
	>0.
	\]
	Hence, \(\ell_i^D\le h_i d_i<2\ell_i^D\).
	If \(\mathcal G_{\leq k}=\emptyset\), the prefix bound is immediate.
	Otherwise, summing over \(\mathcal G_{\leq k}\) and applying
	Lemma~\ref{lem:alg1_prefix_load} gives
	\[
	\sum_{i\in\mathcal G_{\leq k}}h_i d_i
	<
	2\sum_{i\in\mathcal G_{\leq k}}\ell_i^D
	\le
	4M_D\,2^k,
	\]
	which completes the proof.
\end{proof}
	
\begin{lemma}
	For each \(k\in\mathbb Z\), Algorithm~\ref{alg:bounded_scheduling} completes
	all requests in \(\mathcal G_{\leq k}\) by time
	\[
	\left(
	4+\frac{2}{1-\alpha}+\frac{4}{1-\beta}
	\right)2^k.
	\]
	\label{lem:alg1_prefix_completion}
\end{lemma}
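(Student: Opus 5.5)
The plan is to decompose the time interval \([0,T_k)\) slot by slot and charge every slot either to a memory-time load bound or to a short per-group tail. Groups are processed sequentially and each group is finished before the next starts. Hence \(T_k\) is the total length of the processing intervals of the nonempty groups \(\mathcal G_q\), \(q\leq k\), which are finitely many. Within the interval of a fixed group \(\mathcal G_q\), the sets \(\mathcal U_q,\mathcal E,\mathcal W,\mathcal D\) change only at event times. So it suffices to classify each inter-event interval by the state of the sets right after the admission loops at its left endpoint. Reservations and occupancies are constant until the next event, because completions are events. There are no idle slots: if \(\mathcal E\cup\mathcal W\cup\mathcal D=\emptyset\) and \(\mathcal U_q\neq\emptyset\), the first request considered is admitted, since \(p_i\leq M_P\).

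I classify each slot into four disjoint cases, in priority order.
\begin{enumerate}
\item[(a)] \(\mathcal W\neq\emptyset\) after the decode loop. Some \(i\) was rejected, so \(\sum_{j\in\mathcal D}h_j>M_D-h_i\geq(1-\beta)M_D\). Each \(j\in\mathcal D\) contributes its reservation \(h_j\) only during its \(d_j\) decode slots. Thus the number of case-(a) slots over all groups \(q\leq k\) is less than \(\sum_{j\in\mathcal G_{\leq k}}h_jd_j/((1-\beta)M_D)\), which is less than \(4\cdot2^k/(1-\beta)\) by Lemma~\ref{lem:alg1_decode_reservation}.
\item[(b)] \(\mathcal W=\emptyset\) but \(\mathcal U_q\neq\emptyset\) after the prefill loop. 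A rejected \(i\) gives \(\sum_{j\in\mathcal E}p_j>(1-\alpha)M_P\). These are requests executing prefill, each contributing \(p_j\) for \(m_j\) slots. Lemma~\ref{lem:alg1_prefix_load} then bounds the case-(b) slots by \(2\cdot2^k/(1-\alpha)\).
\item[(c)] \(\mathcal U_q=\mathcal W=\emptyset\) and \(\mathcal E\neq\emptyset\): the prefill tail of group \(q\).
\item[(d)] Only \(\mathcal D\neq\emptyset\): the decode tail of group \(q\).
\end{enumerate}

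For the tails, note that once \(\mathcal U_q\) becomes empty it stays empty. Every case-(c) slot therefore lies within \(\max_{i\in\mathcal G_q}m_i\) of the last prefill admission of the group. Similarly, once \(\mathcal U_q\cup\mathcal E\cup\mathcal W=\emptyset\) no new decode starts, so the case-(d) slots number at most \(\max_{i\in\mathcal G_q}d_i\). By Constraint~(\plaineqref{eq:lp_single}), \(m_i+d_i\leq\bar C_i\leq2^q\) for \(i\in\mathcal G_q\). Hence the tails of group \(q\) contribute at most \(2\cdot2^q\) slots in total. Summing the geometric series over \(q\leq k\) gives less than \(4\cdot2^k\). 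Adding the three contributions yields \(T_k\leq\bigl(4+\frac{2}{1-\alpha}+\frac{4}{1-\beta}\bigr)2^k\).

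The main obstacle is making the case-(c) tail bound rigorous. After \(\mathcal U_q\) empties, slots may alternate between cases (a) and (c). Case-(c) slots after the last prefill admission are still covered, because they all occur while some prefill started at or before that admission is running, i.e., within \(\max m_i\) of it. The case-(c) slots before that admission need a separate argument. There \(\mathcal U_q\) was nonempty, but that is exactly case (b) or case (a) by priority. One must check that \(\mathcal U_q=\emptyset\) holds throughout case (c), so no double counting or gaps occur.
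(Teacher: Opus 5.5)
Your proposal is correct and follows essentially the same route as the paper's proof. It uses the same four-way priority partition of slots and charges the blocked slots through the reserved decode load and the occupied prefill load (Lemmas~\ref{lem:alg1_decode_reservation} and~\ref{lem:alg1_prefix_load}). It bounds each group's prefill and decode tails by $2^q$ and sums these geometrically. The only difference is that you anchor the prefill tail at the group's last prefill admission rather than at its first $\mathcal T_3$ slot; this is equivalent, since $\mathcal U_q$ is empty exactly from that admission onward, so the case-(c) slots you flagged as needing a separate argument cannot occur.
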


\begin{proof}
	If \(\mathcal G_{\leq k}=\emptyset\), the claim is immediate. Otherwise,
	let \(T_k\) denote the time at which all requests in
	\(\mathcal G_{\leq k}\) have completed. By the strict group barrier, only
	requests in \(\mathcal G_{\leq k}\) are processed during \([0,T_k)\).
	
	For each slot \(t<T_k\), let \(\mathcal U_q(t)\), \(\mathcal E(t)\),
	\(\mathcal W(t)\), and \(\mathcal D(t)\) denote the request sets during
	\([t,t+1)\), where \(\mathcal G_q\) is the group currently being processed.
	Define
	\[
	\begin{aligned}
		\mathcal T_1
		&:=
		\{t<T_k:\mathcal W(t)\neq\emptyset\},\\
		\mathcal T_2
		&:=
		\{t<T_k:\mathcal W(t)=\emptyset,\,
		\mathcal U_q(t)\neq\emptyset\},\\
		\mathcal T_3
		&:=
		\{t<T_k:\mathcal W(t)=\mathcal U_q(t)=\emptyset,\,
		\mathcal E(t)\neq\emptyset\},\\
		\mathcal T_4
		&:=
		\{t<T_k:\mathcal W(t)=\mathcal U_q(t)=\mathcal E(t)=\emptyset,\,
		\mathcal D(t)\neq\emptyset\}.
	\end{aligned}
	\]
	For every \(t\in\{0,\ldots,T_k-1\}\), at least one of
	\(\mathcal U_q(t)\), \(\mathcal E(t)\), \(\mathcal W(t)\), and
	\(\mathcal D(t)\) is nonempty. Therefore,
	\(\mathcal T_1,\ldots,\mathcal T_4\) are pairwise disjoint and partition the
	slots in \([0,T_k)\). Here, \(|\mathcal T_r|\) denotes the number of slots
	in \(\mathcal T_r\), for \(r=1,\ldots,4\).
	
	\emph{Case 1: \(t\in\mathcal T_1\).}
	For any \(t\in\mathcal T_1\), choose a request
	\(i\in\mathcal W(t)\) that remains waiting for decode admission. By the
	decode admission rule in Algorithm~\ref{alg:bounded_scheduling},
	\(\sum_{j\in\mathcal D(t)}h_j+h_i>M_D\).
	Since \(h_i\leq\beta M_D\), it follows that
	\[
	\sum_{j\in\mathcal D(t)}h_j>(1-\beta)M_D.
	\]
	Summing this inequality over \(t\in\mathcal T_1\) gives
	\[
	|\mathcal T_1|(1-\beta)M_D
	\leq
	\sum_{t\in\mathcal T_1}\sum_{j\in\mathcal D(t)}h_j
	\leq
	\sum_{i\in\mathcal G_{\leq k}}h_i d_i
	<
	4M_D\,2^k,
	\]
	where the final inequality follows from
	Lemma~\ref{lem:alg1_decode_reservation}. Therefore,
	\[
	|\mathcal T_1|<\frac{4}{1-\beta}2^k.
	\]
	
	\emph{Case 2: \(t\in\mathcal T_2\).}
	For any \(t\in\mathcal T_2\), choose a request
	\(i\in\mathcal U_q(t)\) that remains unadmitted to prefill. Because
	\(\mathcal W(t)=\emptyset\), the prefill admission rule in Algorithm~\ref{alg:bounded_scheduling} implies
	\(\sum_{j\in\mathcal E(t)}p_j+p_i>M_P\).
	Since \(p_i\leq\alpha M_P\), it follows that
	\[
	\sum_{j\in\mathcal E(t)}p_j>(1-\alpha)M_P.
	\]
	Summing this inequality over \(t\in\mathcal T_2\) gives
	\[
	|\mathcal T_2|(1-\alpha)M_P
	\leq
	\sum_{t\in\mathcal T_2}\sum_{j\in\mathcal E(t)}p_j
	\leq
	\sum_{i\in\mathcal G_{\leq k}}\ell_i^P
	\leq
	2M_P\,2^k,
	\]
	where the final inequality follows from
	Lemma~\ref{lem:alg1_prefix_load}. Therefore,
	\[
	|\mathcal T_2|\leq\frac{2}{1-\alpha}2^k.
	\]
	
	\emph{Case 3: \(t\in\mathcal T_3\).}
	Fix a nonempty group \(\mathcal G_q\). Suppose that
	\(\mathcal T_3\) contains a slot during which \(\mathcal G_q\) is being
	processed, and consider the first such slot. At this time,
	\(\mathcal U_q(t)=\emptyset\), so no additional request from
	\(\mathcal G_q\) can enter prefill. The group barrier also prevents requests
	from later groups from entering prefill. For every
	\(i\in\mathcal G_q\), Constraint~(\plaineqref{eq:lp_single}) and the
	definition of \(\mathcal G_q\) imply
	\[
	m_i\leq m_i+d_i\leq\bar C_i\leq2^q.
	\]
	Thus, every request executing prefill in this slot completes prefill within
	at most \(2^q\) additional slots. Since no new request can enter prefill,
	the number of slots in \(\mathcal T_3\) during which
	\(\mathcal G_q\) is processed is at most \(2^q\).
	
	\emph{Case 4: \(t\in\mathcal T_4\).}
	Fix a nonempty group \(\mathcal G_q\). Suppose that
	\(\mathcal T_4\) contains a slot during which \(\mathcal G_q\) is being
	processed, and consider the first such slot. At this time,
	\(\mathcal U_q(t)=\mathcal W(t)=\mathcal E(t)=\emptyset\), so only requests
	executing decoding remain. No additional request can enter decoding
	thereafter. For every \(i\in\mathcal G_q\),
	Constraint~(\plaineqref{eq:lp_single}) and the definition of
	\(\mathcal G_q\) imply
	\[
	d_i\leq m_i+d_i\leq\bar C_i\leq2^q.
	\]
	Thus, every remaining request completes decoding within at most \(2^q\)
	additional slots. Therefore, the number of slots in \(\mathcal T_4\) during
	which \(\mathcal G_q\) is processed is at most \(2^q\).
	
	Since
	\[
	\sum_{\substack{q\leq k\\\mathcal G_q\neq\emptyset}}2^q
	\leq
	\sum_{q=-\infty}^{k}2^q
	=
	2^{k+1},
	\]
	combining the four cases gives
	\[
	\begin{aligned}
		T_k
		&=
		|\mathcal T_1|+|\mathcal T_2|
		+|\mathcal T_3|+|\mathcal T_4|<
		\frac{4}{1-\beta}2^k
		+
		\frac{2}{1-\alpha}2^k
		+
		2\sum_{\substack{q\leq k\\\mathcal G_q\neq\emptyset}}2^q\\
		&\leq
		\frac{4}{1-\beta}2^k
		+
		\frac{2}{1-\alpha}2^k
		+
		4\cdot2^k=
		\left(
		4+\frac{2}{1-\alpha}+\frac{4}{1-\beta}
		\right)2^k.
	\end{aligned}
	\]
	This completes the proof.
\end{proof}

\section{Proofs for Algorithm~\ref{alg:general_scheduling}}
\label{app:alg2-proofs}
\paragraph{Slot Categories.}
To bound the completion time of a geometric prefix, we assign every execution
slot, after the admission decisions at its starting time, to one charging
category. During the \(\mathcal C^{LL}\) phase of group \(\mathcal H_q\), each
prefill slot is assigned to the large prefill category and each decode slot
to the large decode category. For each remaining class
\(\mathcal C\in\{\mathcal C^{SL},\mathcal C^{LS},\mathcal C^{SS}\}\), consider
the phase that processes \(\mathcal H_q\cap\mathcal C\). Let
\(\mathcal U_{q,\mathcal C}(t)\), \(\mathcal E(t)\), \(\mathcal W(t)\), and
\(\mathcal D(t)\) denote the request sets after admission at time \(t\). The
slots in these three class phases are assigned by the following priority
rules.

\begin{itemize}
	\item
	For \(\mathcal C^{SL}\), assign the slot to the large decode category if
	\(\mathcal D(t)\ne\emptyset\). Otherwise, assign it to the blocked small
	prefill category if \(\mathcal U_{q,\mathcal C}(t)\ne\emptyset\), and to the
	prefill tail category in the remaining case.
	
	\item
	For \(\mathcal C^{LS}\), assign the slot to the blocked small decode
	category if \(\mathcal W(t)\ne\emptyset\). Otherwise, assign it to the
	large prefill category if \(\mathcal E(t)\ne\emptyset\), and to the
	decode tail category in the remaining case.
	
	\item
	For \(\mathcal C^{SS}\), assign the slot to the blocked small decode
	category if \(\mathcal W(t)\ne\emptyset\). Otherwise, assign it to the
	blocked small prefill category if
	\(\mathcal U_{q,\mathcal C}(t)\ne\emptyset\), to the
	prefill tail category if \(\mathcal E(t)\ne\emptyset\), and to the
	decode tail category in the remaining case.
\end{itemize}

Under this assignment, a prefill tail slot satisfies
\(\mathcal U_{q,\mathcal C}(t)=\mathcal W(t)=\emptyset\) and
\(\mathcal E(t)\ne\emptyset\); requests in \(\mathcal D(t)\) may execute
concurrently. A decode tail slot satisfies
\(\mathcal U_{q,\mathcal C}(t)=\mathcal E(t)=\mathcal W(t)=\emptyset\) and
\(\mathcal D(t)\ne\emptyset\).
These rules are exhaustive. In a \(\mathcal C^{SL}\) phase, if
\(\mathcal D(t)=\emptyset\), then \(\mathcal W(t)=\emptyset\), since any
waiting request fits in an empty decode pool. If
\(\mathcal U_{q,\mathcal C}(t)=\emptyset\) as well, every unfinished request
must be in \(\mathcal E(t)\), which gives a prefill tail. In a
\(\mathcal C^{LS}\) phase, if \(\mathcal W(t)=\mathcal E(t)=\emptyset\), then
\(\mathcal U_{q,\mathcal C}(t)=\emptyset\), since any unstarted request fits
in an empty prefill pool. Every unfinished request must therefore be in
\(\mathcal D(t)\), which gives a decode tail. The four priority cases for
\(\mathcal C^{SS}\) directly cover every possible unfinished state. The
strict group and class barriers ensure that no request outside the current
phase affects this classification.

\begin{lemma}
	\label{lem:alg2_prefix_load}
	For every \(k\in\mathbb Z\), the prefix \(\mathcal H_{\leq k}\) satisfies
	\(\sum_{i\in\mathcal H_{\leq k}}\ell_i^P
	\leq 2M_P B_k\) and \(
	\sum_{i\in\mathcal H_{\leq k}}\ell_i^D
	\leq
	2M_D B_k.\)
\end{lemma}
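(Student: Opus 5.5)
The argument follows the proof of Lemma~\ref{lem:alg1_prefix_load} verbatim, with the dyadic boundary \(2^k\) replaced by the geometric boundary \(B_k=\rho^k\). If \(\mathcal H_{\leq k}=\emptyset\), both sums vanish and the claim is immediate, so assume the prefix is nonempty. By definition of the geometric groups, each \(i\in\mathcal H_{\leq k}\) lies in some \(\mathcal H_q\) with \(q\leq k\). Hence \(\bar C_i\leq B_q\leq B_k\), using \(\rho>1\) so that \(B_q=\rho^q\) is increasing in \(q\).

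Fix either memory pool, with capacity \(M\in\{M_P,M_D\}\) and loads \(\ell_i\in\{\ell_i^P,\ell_i^D\}\), and write \(L:=\sum_{i\in\mathcal H_{\leq k}}\ell_i>0\). The upper bound on the relaxed completion times gives
\[
\sum_{i\in\mathcal H_{\leq k}}\ell_i\bar C_i\leq B_k L .
\]
Apply Constraint~(\plaineqref{eq:lp_prefill_subset}) or (\plaineqref{eq:lp_decode_subset}) with \(\mathcal S=\mathcal H_{\leq k}\), which is allowed because the LP solution is feasible and the prefix is a nonempty subset. Dropping the nonnegative term \(\sum_{i}\ell_i^2\) then yields
\[
\sum_{i\in\mathcal H_{\leq k}}\ell_i\bar C_i\geq \frac{L^2}{2M}.
\]
Combining the two inequalities gives \(L^2/(2M)\leq B_kL\). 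Dividing by \(L>0\) gives \(L\leq 2MB_k\), which is the prefill bound for \(M=M_P\) and the decode bound for \(M=M_D\).

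There is no real obstacle. The only points needing care are that the prefix is nonempty, so that the subset constraint applies and the division by \(L\) is legitimate, and that \(\bar C_i\leq B_k\) holds for every member of the prefix, which uses only the monotonicity of \(B_q\) in \(q\).
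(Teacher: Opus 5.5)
Your proof is correct and is essentially the paper's proof. Both arguments bound \(\sum_{i\in\mathcal H_{\leq k}}\ell_i\bar C_i\) above by \(B_k\sum_{i}\ell_i\), using \(\bar C_i\leq B_k\) on the prefix, and below by \(\bigl(\sum_{i}\ell_i\bigr)^2/(2M)\), by applying the subset constraint to \(\mathcal H_{\leq k}\) and dropping the squared term, and then divide by the positive total load, with the same care about nonemptiness and the monotonicity of \(B_q=\rho^q\).
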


\begin{proof}
	The claim is immediate when
	\(\mathcal H_{\leq k}=\emptyset\). Otherwise, every
	\(i\in\mathcal H_{\leq k}\) satisfies \(\bar C_i\leq B_k\).
	Applying Constraint~(\plaineqref{eq:lp_prefill_subset}) to
	\(\mathcal H_{\leq k}\) gives
	\[
	\frac{1}{2M_P}
	\left(
	\sum_{i\in\mathcal H_{\leq k}}\ell_i^P
	\right)^2
	\leq
	\sum_{i\in\mathcal H_{\leq k}}\ell_i^P\bar C_i
	\leq
	B_k
	\sum_{i\in\mathcal H_{\leq k}}\ell_i^P.
	\]
	Since the total prefill load is positive, division gives
	\[
	\sum_{i\in\mathcal H_{\leq k}}\ell_i^P
	\leq
	2M_P B_k.
	\]
	Replacing \((\ell_i^P,M_P)\) with \((\ell_i^D,M_D)\) and applying
	Constraint~(\plaineqref{eq:lp_decode_subset}) proves the decode bound.
\end{proof}

\begin{lemma}
	\label{lem:alg2_prefix_completion}
	For every \(k\in\mathbb Z\),
	Algorithm~\ref{alg:general_scheduling} completes all requests in
	\(\mathcal H_{\leq k}\) by time
	\(\left(
	12+\frac{4\rho}{\rho-1}
	\right)B_k.\)
\end{lemma}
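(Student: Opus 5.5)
The proof follows the pattern of Lemma~\ref{lem:alg1_prefix_completion}. I would partition the slots in \([0,T_k)\) using the charging categories defined above (large prefill, large decode, blocked small prefill, blocked small decode, prefill tail, decode tail). I then group these categories into three families: slots charged to prefill memory-time, slots charged to reserved decode memory-time, and tail slots. The target split is
\[
T_k\le 4B_k+8B_k+\frac{4\rho}{\rho-1}B_k .
\]
By the strict group and class barriers, only requests in \(\mathcal H_{\leq k}\) are active before \(T_k\). Each slot receives exactly one category, by the exhaustiveness discussion above.

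\textbf{Prefill-charged slots.} These are the large prefill slots and the blocked small prefill slots. I will show that in each such slot the requests in \(\mathcal E(t)\) occupy more than \(M_P/2\) units of prefill memory.
\begin{itemize}
\item In a \(\mathcal C^{LL}\) prefill slot, a single request with \(p_i>M_P/2\) is executing prefill.
\item In a \(\mathcal C^{LS}\) large prefill slot, \(\mathcal E(t)\neq\emptyset\) and every request in it is prefill-large.
\item In a blocked small prefill slot, \(\mathcal W(t)=\emptyset\): in \(\mathcal C^{SS}\) this holds by priority, and in \(\mathcal C^{SL}\) it follows from \(\mathcal D(t)=\emptyset\). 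Some \(i\in\mathcal U_{q,\mathcal C}(t)\) with \(p_i\le M_P/2\) was rejected, so \(\sum_{j\in\mathcal E(t)}p_j>M_P-p_i\ge M_P/2\).
\end{itemize}
Summing over these slots, the total is at most the prefill memory-time \(\sum_{i\in\mathcal H_{\le k}}\ell_i^P\le 2M_PB_k\) by Lemma~\ref{lem:alg2_prefix_load}. This bounds the number of prefill-charged slots by \(4B_k\).

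\textbf{Decode-charged slots.} These are the large decode slots and the blocked small decode slots. I will show that in each such slot the reserved decode memory satisfies \(\sum_{j\in\mathcal D(t)}h_j>M_D/2\).
\begin{itemize}
\item In a \(\mathcal C^{LL}\) decode slot, a decode-large request is decoding.
\item In a \(\mathcal C^{SL}\) slot with \(\mathcal D(t)\neq\emptyset\), every request in \(\mathcal D(t)\) is decode-large.
\item In a blocked small decode slot, some waiting \(i\) with \(h_i\le M_D/2\) failed the peak reservation test.
\end{itemize}
Each request \(j\) contributes \(h_j\) for exactly \(d_j\) slots. The total is therefore at most \(\sum_{i\in\mathcal H_{\le k}}h_id_i<2\sum\ell_i^D\le 4M_DB_k\), using the per-request inequality \(h_id_i<2\ell_i^D\) from Lemma~\ref{lem:alg1_decode_reservation} (which is request-wise) together with Lemma~\ref{lem:alg2_prefix_load}. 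This gives fewer than \(8B_k\) decode-charged slots.

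\textbf{Tail slots.} Fix a group \(\mathcal H_q\) with \(q\le k\). Tails occur only in the \(\mathcal C^{SL}\) prefill tail, the \(\mathcal C^{LS}\) decode tail, and the \(\mathcal C^{SS}\) prefill and decode tails, so there are at most four tail types per group.
\begin{itemize}
\item From the first prefill tail slot of a phase, \(\mathcal U_{q,\mathcal C}=\emptyset\), so no new prefill admissions occur. Every later prefill tail slot of that phase still needs a request in \(\mathcal E\) admitted earlier, and each such request has \(m_i\le\bar C_i\le B_q\) by Constraint~(\plaineqref{eq:lp_single}). Hence the prefill tail of that phase has at most \(B_q\) slots.
\item From the first decode tail slot, \(\mathcal U,\mathcal E,\mathcal W\) are all empty and remain empty, and \(d_i\le B_q\). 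Hence the decode tail of that phase also has at most \(B_q\) slots.
\end{itemize}
This gives at most \(4B_q\) tail slots per group. Summing over \(q\le k\) gives \(4\sum_{q\le k}\rho^q=\frac{4\rho}{\rho-1}B_k\).

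\textbf{Main obstacle.} The step needing the most care is verifying that every category lands in the correct family with the stated memory inequality. The delicate case is the \(\mathcal C^{SL}\) blocked prefill slot, where \(\mathcal W(t)=\emptyset\) must be deduced from \(\mathcal D(t)=\emptyset\); a second point is confirming that tail slots cannot resume after a non-tail interruption without a fresh admission. Adding the three families then yields \(\left(12+\frac{4\rho}{\rho-1}\right)B_k\).
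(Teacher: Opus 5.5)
Your proposal is correct and follows essentially the same argument as the paper. It uses the same slot categories grouped into prefill-charged, decode-charged, and tail families, bounded by \(4B_k\), \(8B_k\), and \(\frac{4\rho}{\rho-1}B_k\) via Lemma~\ref{lem:alg2_prefix_load}, the per-request bound \(h_id_i<2\ell_i^D\), and per-phase tails of at most \(B_q\) slots. The only difference is bookkeeping: you charge every prefill- or decode-charged slot uniformly through an occupancy exceeding half the capacity. The paper instead bounds large-request slots via \(m_i<2\ell_i^P/M_P\) and \(d_i<4\ell_i^D/M_D\), bounds blocked slots via occupancy, and then adds the sums over the disjoint classes \(\mathcal L_P,\mathcal S_P\) (resp.\ \(\mathcal L_D,\mathcal S_D\)).
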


\begin{proof}
	The claim is immediate when
	\(\mathcal H_{\leq k}=\emptyset\). Otherwise, define
	\(T_k:=\max_{i\in\mathcal H_{\leq k}}C_i\), the time at which the algorithm
	completes the prefix \(\mathcal H_{\leq k}\). The strict group barrier
	implies that only requests in this prefix are processed during
	\([0,T_k)\).
	
	Let \(\mathcal T_{PL}\), \(\mathcal T_{PS}\),
	\(\mathcal T_{DL}\), and \(\mathcal T_{DS}\) denote the sets of slots
	assigned to the large prefill, blocked small prefill, large decode, and
	blocked small decode categories, respectively. Let
	\(\mathcal T_{\mathrm{tail}}\) contain all prefill and decode tail slots.
	
	\emph{1. Prefill Categories.}
	For every \(i\in\mathcal L_P\), \(p_i>M_P/2\), and hence
	\[
	m_i
	=
	\frac{\ell_i^P}{p_i}
	<
	\frac{2\ell_i^P}{M_P}.
	\]
	Therefore,
	\[
	|\mathcal T_{PL}|
	\leq
	\sum_{i\in\mathcal H_{\leq k}\cap\mathcal L_P}m_i
	\leq
	\frac{2}{M_P}
	\sum_{i\in\mathcal H_{\leq k}\cap\mathcal L_P}\ell_i^P.
	\]
	
	For every \(t\in\mathcal T_{PS}\), a prefill-small request remains
	unadmitted while \(\mathcal W(t)=\emptyset\). If \(i\) is such a request,
	then \(p_i\leq M_P/2\), and the failed prefill admission implies
	\(\sum_{j\in\mathcal E(t)}p_j+p_i>M_P\).
	Consequently,
	\(\sum_{j\in\mathcal E(t)}p_j>\frac{M_P}{2}\).
	All requests in \(\mathcal E(t)\) belong to \(\mathcal S_P\). Summing over
	\(\mathcal T_{PS}\) gives
	\[
	|\mathcal T_{PS}|\frac{M_P}{2}
	\leq
	\sum_{t\in\mathcal T_{PS}}
	\sum_{j\in\mathcal E(t)}p_j
	\leq
	\sum_{i\in\mathcal H_{\leq k}\cap\mathcal S_P}\ell_i^P.
	\]
	Combining the two prefill bounds and applying
	Lemma~\ref{lem:alg2_prefix_load} yields
	\[
	|\mathcal T_{PL}|+|\mathcal T_{PS}|
	\leq
	\frac{2}{M_P}
	\sum_{i\in\mathcal H_{\leq k}}\ell_i^P
	\leq
	4B_k.
	\]
	
	\emph{2. Decode Categories.}
	Under peak decode reservation, each admitted request \(i\) reserves \(h_i\)
	until its completion. Hence, no two decode-large requests can overlap in the
	decode pool.
	For every \(i\in\mathcal L_D\), \(h_i>M_D/2\). By
	Lemma~\ref{lem:alg1_decode_reservation},
	\(h_i d_i<2\ell_i^D\), and hence
	\(d_i
	<
	\frac{2h_i d_i}{M_D}
	<
	\frac{4\ell_i^D}{M_D}\).
	Therefore,
	\[
	|\mathcal T_{DL}|
	\leq
	\sum_{i\in\mathcal H_{\leq k}\cap\mathcal L_D}d_i
	\leq
	\frac{4}{M_D}
	\sum_{i\in\mathcal H_{\leq k}\cap\mathcal L_D}\ell_i^D.
	\]
	
	For every \(t\in\mathcal T_{DS}\), a decode-small request remains waiting
	after decode admission. If \(i\) is such a request, then
	\(h_i\leq M_D/2\), and the failed decode admission implies
	\[
	\sum_{j\in\mathcal D(t)}h_j+h_i>M_D.
	\]
	Consequently,
	\(\sum_{j\in\mathcal D(t)}h_j>\frac{M_D}{2}\). All requests in \(\mathcal D(t)\) belong to \(\mathcal S_D\). Summing over
	\(\mathcal T_{DS}\) and using \(h_i d_i<2\ell_i^D\) gives
	\[
	|\mathcal T_{DS}|\frac{M_D}{2}
	\leq
	\sum_{t\in\mathcal T_{DS}}
	\sum_{j\in\mathcal D(t)}h_j
	\leq
	\sum_{i\in\mathcal H_{\leq k}\cap\mathcal S_D}h_i d_i
	\leq
	2\sum_{i\in\mathcal H_{\leq k}\cap\mathcal S_D}\ell_i^D.
	\]
	Combining the two decode bounds and applying
	Lemma~\ref{lem:alg2_prefix_load} yields
	\[
	|\mathcal T_{DL}|+|\mathcal T_{DS}|
	\leq
	\frac{4}{M_D}
	\sum_{i\in\mathcal H_{\leq k}}\ell_i^D
	\leq
	8B_k.
	\]
	
	\emph{3. Tail Categories.}
	For each nonempty group \(\mathcal H_q\), tail slots can occur in four
	class phases: the prefill tail of \(\mathcal C^{SL}\), the decode tail of
	\(\mathcal C^{LS}\), and the prefill and decode tails of
	\(\mathcal C^{SS}\). Every \(i\in\mathcal H_q\) satisfies
	\(m_i+d_i\leq\bar C_i\leq B_q\), and hence \(m_i,d_i\leq B_q\).
	
	When a prefill tail first occurs in a class phase, no unstarted request
	remains, so no additional request can enter prefill during that phase.
	All requests currently executing prefill therefore complete prefill within
	at most \(B_q\) slots. When a decode tail first occurs, no request remains
	in \(\mathcal U_{q,\mathcal C}(t)\), \(\mathcal E(t)\), or
	\(\mathcal W(t)\). Hence, no additional request can enter decoding, and all
	requests currently executing
	decoding complete within at most \(B_q\) slots. Therefore,
	\[
	|\mathcal T_{\mathrm{tail}}|
	\leq
	4\sum_{\substack{q\leq k\\\mathcal H_q\ne\emptyset}}B_q
	\leq
	4\sum_{q\leq k}\rho^q
	=
	\frac{4\rho}{\rho-1}B_k.
	\]
	
	Because the slot categories partition \([0,T_k)\),
	\[
	\begin{aligned}
		T_k
		&=
		|\mathcal T_{PL}|+|\mathcal T_{PS}|
		+|\mathcal T_{DL}|+|\mathcal T_{DS}|
		+|\mathcal T_{\mathrm{tail}}|\\
		&\leq
		4B_k+8B_k+\frac{4\rho}{\rho-1}B_k=
		\left(
		12+\frac{4\rho}{\rho-1}
		\right)B_k.
	\end{aligned}
	\]
	This completes the proof.
\end{proof}

\section{Exact ILP Formulation}
\label{app:ilp_formulation}
The following time-indexed ILP computes exact optimal solutions for every
instance satisfying \(p_i\leq M_P\) and
\(h_i\leq M_D\) for all \(i\in\mathcal I\). Under these conditions, processing
the requests sequentially, with each request entering decoding immediately
after completing prefill, gives a feasible schedule with makespan
\[
H:=\sum_{i\in\mathcal I}(m_i+d_i).
\]

An optimal schedule can be chosen with makespan at most \(H\). Indeed, for
any feasible schedule with makespan
\(C_{\max}:=\max_{i\in\mathcal I}C_i\),
\[
\sum_{t=0}^{C_{\max}-1}
\bigl(|\mathcal E(t)|+|\mathcal D(t)|\bigr)
=
\sum_{i\in\mathcal I}(m_i+d_i)
=
H.
\]
If \(C_{\max}>H\), some slot contains no request executing prefill or
decoding. Deleting this slot and shifting all subsequent events one slot
earlier preserves precedence, does not increase memory usage in either pool,
and strictly decreases at least one completion time, contradicting
optimality.

Define the slot set \(\mathbb T:=\{0,\ldots,H-1\}\) and the feasible prefill
and decode admission-time sets
\(\mathbb T_i^P:=\{0,\ldots,H-m_i-d_i\}\) and
\(\mathbb T_i^D:=\{0,\ldots,H-d_i\}\), respectively.
For each \(i\in\mathcal I\), let \(x_{i,s}^P\) and \(x_{i,s}^D\) indicate
whether request \(i\) is admitted to prefill and decoding at time \(s\),
respectively. The ILP is
\begin{subequations}
	\label{eq:ilp}
	\begin{align}
		\mathbf{(ILP)}
		\qquad
		\min\quad
		&
		\sum_{i\in\mathcal I}
		\sum_{s\in\mathbb T_i^D}
		(s+d_i)x_{i,s}^D
		\label{eq:ilp_objective}\\
		\text{s.t.}\quad
		&
		\sum_{s\in\mathbb T_i^P}x_{i,s}^P=1,
		&&
		\forall i\in\mathcal I,
		\label{eq:ilp_unique_prefill}\\
		&
		\sum_{s\in\mathbb T_i^D}x_{i,s}^D=1,
		&&
		\forall i\in\mathcal I,
		\label{eq:ilp_unique_decode}\\
		&
		\sum_{s\in\mathbb T_i^D}s\,x_{i,s}^D
		\geq
		\sum_{s\in\mathbb T_i^P}(s+m_i)x_{i,s}^P,
		&&
		\forall i\in\mathcal I,
		\label{eq:ilp_precedence}\\
		&
		\sum_{i\in\mathcal I}p_i
		\left(
		\sum_{\substack{s\in\mathbb T_i^P\\s\leq t}}x_{i,s}^P
		-
		\sum_{\substack{s\in\mathbb T_i^D\\s\leq t}}x_{i,s}^D
		\right)
		\leq M_P,
		&&
		\forall t\in\mathbb T,
		\label{eq:ilp_prefill_capacity}\\
		&
		\sum_{i\in\mathcal I}
		\sum_{\substack{s\in\mathbb T_i^D\\s\leq t<s+d_i}}
		(p_i+t-s+1)x_{i,s}^D
		\leq M_D,
		&&
		\forall t\in\mathbb T,
		\label{eq:ilp_decode_capacity}\\
		&
		x_{i,s}^P\in\{0,1\},
		&&
		\forall i\in\mathcal I,\ 
		\forall s\in\mathbb T_i^P,
		\label{eq:ilp_prefill_binary}\\
		&
		x_{i,s}^D\in\{0,1\},
		&&
		\forall i\in\mathcal I,\ 
		\forall s\in\mathbb T_i^D.
		\label{eq:ilp_decode_binary}
	\end{align}
\end{subequations}

The objective minimizes the sum of request completion times.
Constraints~(\plaineqref{eq:ilp_unique_prefill}) and
(\plaineqref{eq:ilp_unique_decode}) assign one admission time to
each phase, while Constraint~(\plaineqref{eq:ilp_precedence}) enforces
prefill-to-decode precedence. In
Constraint~(\plaineqref{eq:ilp_prefill_capacity}), the difference between
the two cumulative sums equals one exactly when
\(S_i^P\leq t<S_i^D\), thereby enforcing the prefill memory capacity.
Constraint~(\plaineqref{eq:ilp_decode_capacity}) accounts for the
\(p_i+t-s+1\) units of decode memory used during slot \([t,t+1)\) by a
request admitted to decoding at time \(s\).

\begin{theorem}
	\label{thm:ilp_exactness}
	Problems~\(\mathbf{(ILP)}\) and~\(\mathbf{(O)}\) have the same optimal
	objective value.
\end{theorem}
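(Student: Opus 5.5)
We prove the equality with two inequalities. Each uses a correspondence between integer admission times \((S_i^P,S_i^D)\) and the binary vectors \(x^P,x^D\), namely \(x_{i,s}^P=\mathbf 1\{s=S_i^P\}\) and \(x_{i,s}^D=\mathbf 1\{s=S_i^D\}\).

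\emph{ILP \(\geq\) O.} Take any feasible solution of \(\mathbf{(ILP)}\). By Constraints~(\plaineqref{eq:ilp_unique_prefill}) and (\plaineqref{eq:ilp_unique_decode}), each request has exactly one \(s\) with \(x^P_{i,s}=1\) and exactly one with \(x^D_{i,s}=1\). Call these \(S_i^P\) and \(S_i^D\), and set \(C_i^P=S_i^P+m_i\) and \(C_i=S_i^D+d_i\). We then check the constraints of \(\mathbf{(O)}\) one by one.
\begin{itemize}
\item Precedence~(\plaineqref{eq:p0_3}) is exactly Constraint~(\plaineqref{eq:ilp_precedence}).
\item For \(t\in\mathbb T\), the bracket in Constraint~(\plaineqref{eq:ilp_prefill_capacity}) equals \(\mathbf 1\{S_i^P\leq t\}-\mathbf 1\{S_i^D\leq t\}\). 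Because \(S_i^P\leq S_i^D\), this equals \(\mathbf 1\{S_i^P\leq t<S_i^D\}\), which gives (\plaineqref{eq:p0_4}).
\item For \(t\in\mathbb T\), Constraint~(\plaineqref{eq:ilp_decode_capacity}) reduces term by term to (\plaineqref{eq:p0_5}).
\item For \(t\geq H\), both memory constraints hold trivially. Every \(S_i^D\leq H-d_i\), so \(C_i\leq H\), and no request occupies either pool at such \(t\).
\end{itemize}
The objective values coincide, so the ILP optimum is at least \(\mathrm{OPT}\).

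\emph{O \(\geq\) ILP.} This direction needs an optimal schedule of \(\mathbf{(O)}\) whose admission times lie in \(\mathbb T_i^P\) and \(\mathbb T_i^D\). We use the compression argument given just before the statement. First, the sequential schedule shows \(\mathrm{OPT}\) is finite. Since completion times are integers, an optimum is attained; we take an optimal schedule that minimizes \(C_{\max}\) among optimal schedules.

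Suppose \(C_{\max}>H\). The sum \(\sum_t(|\mathcal E(t)|+|\mathcal D(t)|)=H\) over \([0,C_{\max})\) then forces some slot \(t_0\) in which no request is in prefill or decode. Delete that slot and shift every event after it one step earlier. We must check three things.
\begin{itemize}
\item Precedence survives. An idle slot can lie only inside a waiting interval \([C_i^P,S_i^D)\), or before \(S_i^P\), or after \(C_i\). Shifting only shrinks such intervals, so precedence survives, including \(S_i^D\geq C_i^P\).
\item Prefill memory survives. Each request's prefill occupancy interval \([S_i^P,S_i^D)\) either is unchanged, is shifted, or loses slot \(t_0\). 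Hence the occupancy at any slot after the shift equals the occupancy at some slot before it.
\item Decode memory survives. Decode intervals contain no idle slot, so each one is shifted intact, and the per-slot decode usage profile is merely translated.
\end{itemize}
The resulting schedule has \(\sum_i C_i\) no larger, since every completion time is unchanged or decreased, and strictly smaller \(C_{\max}\). This contradicts the choice of the schedule (or contradicts optimality outright if some \(C_i\) drops). Hence \(C_{\max}\leq H\).

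From \(C_{\max}\leq H\), we get \(S_i^D=C_i-d_i\leq H-d_i\) and \(S_i^P\leq S_i^D-m_i\leq H-m_i-d_i\). So the indicator encoding is a feasible ILP solution, by the same slot-by-slot translation of the constraints as in the first direction, and it has objective \(\mathrm{OPT}\).

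The main obstacle is the compression step: we must verify carefully that deleting an idle slot never splits a phase. The key fact is that a slot where \(\mathcal E(t)=\mathcal D(t)=\emptyset\) cannot lie inside any \([S_i^P,C_i^P)\) or \([S_i^D,C_i)\). Everything else is bookkeeping with indicators.
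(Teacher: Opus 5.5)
Your proof is correct and follows the paper's approach. Both directions use the indicator encoding of \(S_i^P,S_i^D\), and the horizon bound \(C_{\max}\leq H\) comes from the same idle-slot deletion argument, which you check in more detail than the paper, including the slots \(t\geq H\) that the ILP does not index. Your tie-break on \(C_{\max}\) among optimal schedules is harmless but unnecessary. The idle slot lies before the decode start of the request attaining \(C_{\max}\), so that completion time always strictly drops and optimality is contradicted directly, as the paper states.
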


\begin{proof}
	Consider an optimal schedule for \(\mathbf{(O)}\) with makespan at most
	\(H\). The horizon bound and precedence relation imply
	\(S_i^P\in\mathbb T_i^P\) and \(S_i^D\in\mathbb T_i^D\). Setting
	\(x_{i,S_i^P}^P=x_{i,S_i^D}^D=1\) and all other variables to zero yields
	a feasible solution to \(\mathbf{(ILP)}\) with the same objective value.
	
	Conversely, Constraints~(\plaineqref{eq:ilp_unique_prefill}) and
	(\plaineqref{eq:ilp_unique_decode}) define admission times \(S_i^P\) and
	\(S_i^D\) for every request. The remaining constraints enforce precedence
	and both memory capacities in every slot, while the objective equals
	\(\sum_i(S_i^D+d_i)=\sum_i C_i\). Thus, every feasible ILP solution defines
	a feasible schedule for \(\mathbf{(O)}\) with the same objective value.
\end{proof}

\section{Additional Experimental Results}
\label{app:experiments}

This appendix provides the complete CPU and GPU results supporting
Section~\ref{sec:experiments}. In the CPU experiments, ALG/OPT denotes the
algorithmic objective divided by the exact optimum obtained from the
time-indexed ILP, whereas ALG/LP uses the LP lower bound
\(\mathrm{LP}^{\star}\). In the GPU experiments, total completion time (TCT)
is normalized by the corresponding P/D baseline under the same
model, workload, and operating condition. 

\subsection{CPU Results}
\label{app:cpu_experiments}
We evaluate both algorithms on synthetic and trace-derived instances.
Each synthetic setting uses 30 random seeds. For each request, we
randomly select one of four sampling schemes with equal probability:
uniform, bimodal, heavy-tailed, or correlated sampling.
The correlated scheme generates \(m_i,p_i,d_i\) from a shared random
variable. All three parameters lie in \(\{1,\ldots,20\}\).
The supplementary code specifies the sampling rules.

For each trace workload, we use 30 windows of 250 requests and take
the first \(n\) requests from each window. We set \(p_i\) and \(d_i\)
to the recorded input and output token counts, respectively, and
use \(m_i=\lceil p_i/512\rceil\).
All CPU instances assume that requests are released at time zero.

For the capacity experiments, we set
\(M_P=\lceil\kappa\max_i p_i\rceil\) and
\(M_D=\lceil\kappa\max_i h_i\rceil\), where \(\kappa\) is the capacity
multiplier.

\paragraph{Synthetic Workloads.}
Figure~\ref{fig:cpu_exact} compares both algorithms with the exact ILP
optimum on small instances. ALG/OPT generally increases with \(n\), and
Algorithm~\ref{alg:bounded_scheduling} has the lower ratio across the tested
capacities. For larger instances, Figure~\ref{fig:cpu_lp_scaling} shows that
ALG/LP decreases as \(n\) grows, while LP solution time increases.

\begin{figure}[htp]
	\centering
	\includegraphics[width=0.90\linewidth]{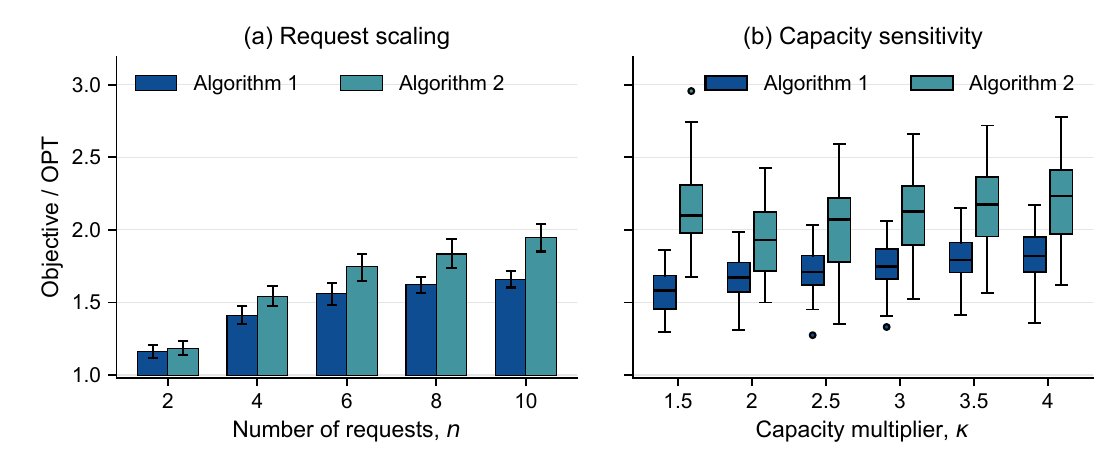}
	\caption{(a) ALG/OPT versus \(n\) at \(\kappa=2\).
		(b) ALG/OPT versus \(\kappa\) at \(n=10\).}
	\label{fig:cpu_exact}
\end{figure}

\begin{figure}[htp]
	\centering
	\includegraphics[width=0.90\linewidth]{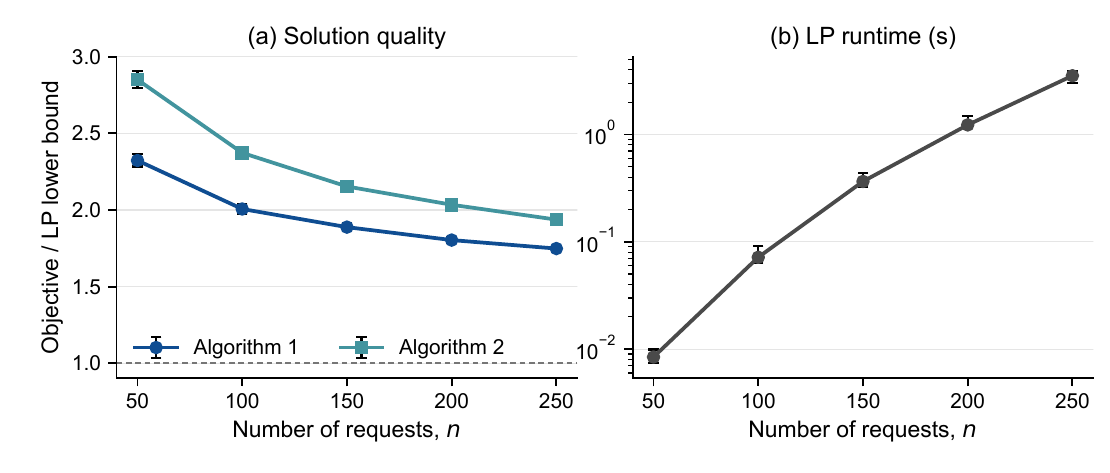}
	\caption{(a) ALG/LP versus \(n\) at \(\kappa=2\).
		(b) LP solution time versus \(n\) at \(\kappa=2\).}
	\label{fig:cpu_lp_scaling}
\end{figure}

\begin{figure}[htp]
	\centering
	\includegraphics[width=0.90\linewidth]{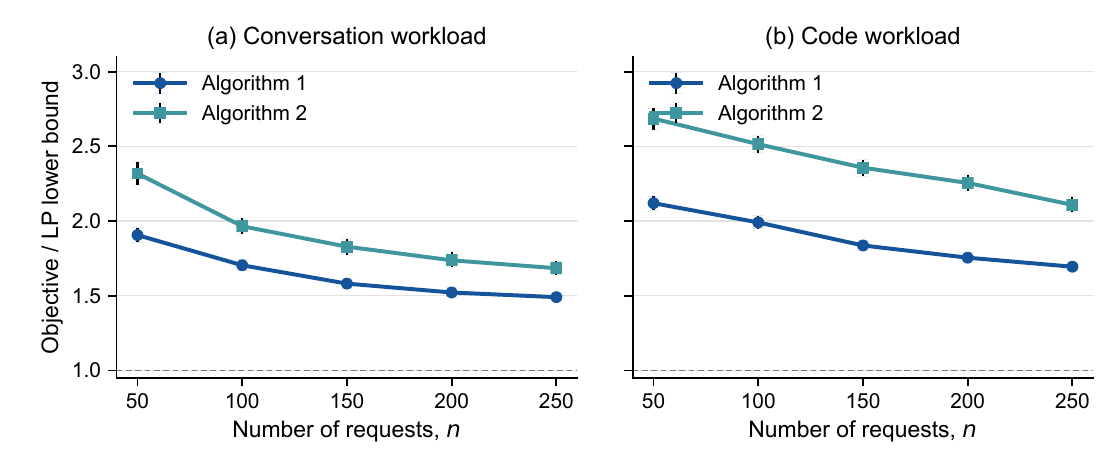}
	\caption{ALG/LP versus \(n\) at \(\kappa=2\) for
		(a) conversation and (b) code workloads.}
	\label{fig:cpu_trace_scaling}
\end{figure}

\paragraph{Trace-Derived Workloads.}
We use conversation and code workloads derived from the Azure LLM Inference
Trace 2023~\citep{splitwise}. Figure~\ref{fig:cpu_trace_scaling} shows that
ALG/LP decreases with \(n\) for both workloads, with
Algorithm~\ref{alg:bounded_scheduling} yielding lower ratios throughout.

\begin{figure}[htp]
	\centering
	\includegraphics[width=0.90\linewidth]{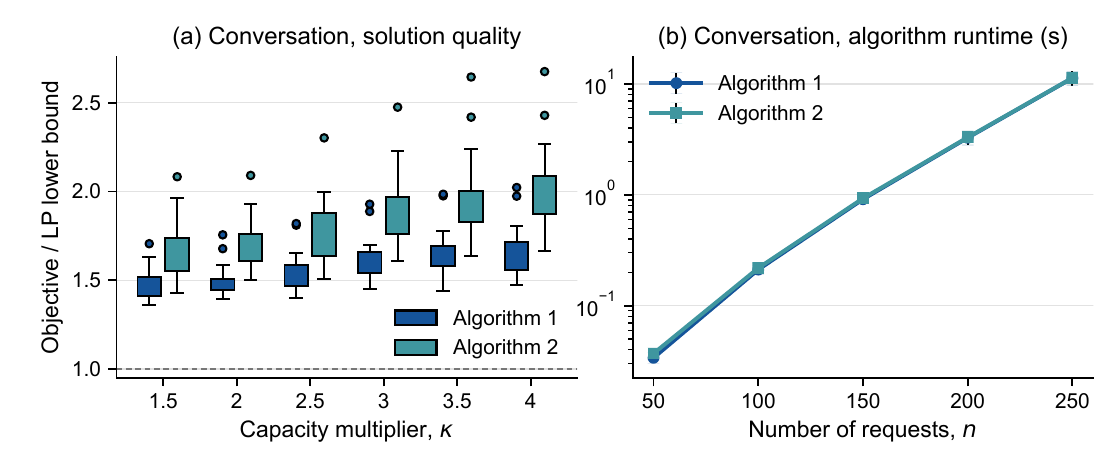}
	\caption{Conversation workload. (a) ALG/LP versus \(\kappa\) at \(n=250\).
		(b)  Runtime versus \(n\) at \(\kappa=2\).}
	\label{fig:cpu_conversation_capacity}
\end{figure}

\begin{figure}[htp]
	\centering
	\includegraphics[width=0.90\linewidth]{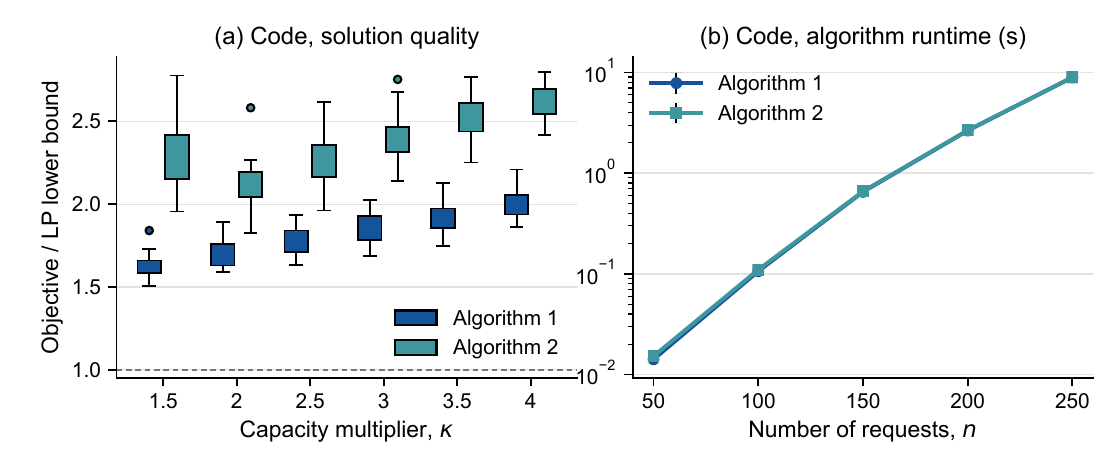}
	\caption{Code workload. (a) ALG/LP versus \(\kappa\) at \(n=250\).
		(b) Runtime versus \(n\) at \(\kappa=2\).}
	\label{fig:cpu_code_capacity}
\end{figure}

\begin{figure}[htp]
	\centering
	\includegraphics[width=0.90\linewidth]{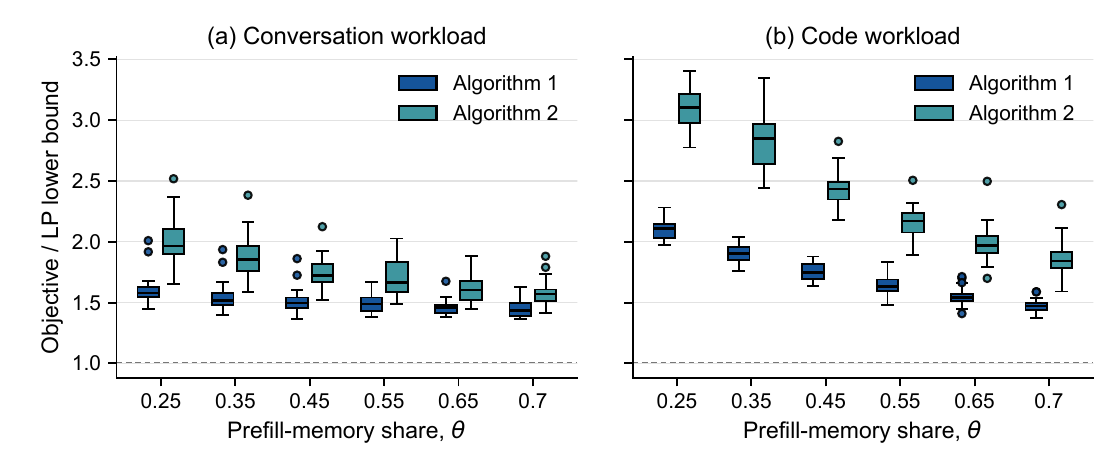}
	\caption{ALG/LP versus \(\theta\) for
		(a) conversation and (b) code workloads.}
	\label{fig:cpu_pool_split_ratio}
\end{figure}

Figures~\ref{fig:cpu_conversation_capacity}
and~\ref{fig:cpu_code_capacity} show that ALG/LP generally increases with
\(\kappa\), while algorithm runtime increases with \(n\). Runtime includes
both LP solution and schedule construction. The two runtime curves nearly
overlap because both algorithms use the same LP solution, which accounts for
most of their runtime, especially for larger \(n\).

We further examine how a fixed total memory budget is divided between the
two pools. For each instance, let
\(B=\lceil2\max_i p_i\rceil+\lceil2\max_i h_i\rceil\)
denote the total budget at \(\kappa=2\). We set
\(M_P=\lfloor\theta B\rfloor\) and \(M_D=B-M_P\), and vary
\(\theta\in\{0.25,0.35,0.45,0.55,0.65,0.70\}\) at \(n=250\).

Figures~\ref{fig:cpu_pool_split_ratio}
and~\ref{fig:cpu_pool_split_objective} report ALG/LP and total completion
time (TCT), respectively. ALG/LP generally decreases as \(\theta\) increases.
By contrast, TCT rises with \(\theta\) for the conversation
workload and is lowest near \(\theta=0.35\) for the code workload.

\begin{figure}[htp]
	\centering
	\includegraphics[width=0.90\linewidth]{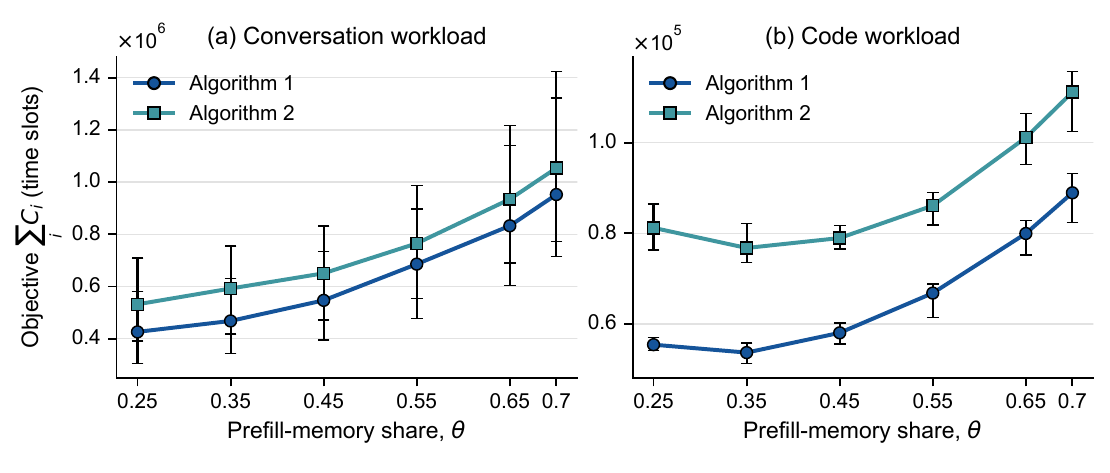}
	\caption{Total completion time versus \(\theta\) for
		(a) conversation and (b) code workloads.}
	\label{fig:cpu_pool_split_objective}
\end{figure}

\subsection{GPU Serving Results}
\label{app:gpu_experiments}
We conduct serving experiments on two NVIDIA H100 GPUs using vLLM~0.12.0
with BF16 precision. We evaluate Qwen2.5
(7B)~\citep{yang2024qwen25} and Llama~3.1
(8B)~\citep{grattafiori2024llama} on conversation and code workloads.

For each request \(i\), \(p_i\) is the input token count measured
by the model's tokenizer. We fix the output length to \(d_i\) tokens
by ignoring the end of sequence token.
We set \(m_i=\lceil \widehat{t}_i^P/\tau\rceil\), where
\(\widehat{t}_i^P\) is the prefill time estimated by interpolating
measured prefill times at different prompt lengths on logarithmic
scales, and \(\tau\) is the measured median time per decode step
for the model.
For \(n=250\), \((\max_i p_i,\max_i h_i)\) equals
\((4107,4176)\) for conversation and \((7436,7448)\) for code
for both models.

P/D and Algorithms~\ref{alg:bounded_scheduling}
and~\ref{alg:general_scheduling} use one GPU for prefill and one for
decoding. Colocated runs a single tensor parallel vLLM instance across
both GPUs, which execute both phases. Colocated Chunked uses the same
configuration with chunked prefill enabled. Each KV cache block holds 16 tokens. For P/D and both algorithms,
we configure \(\lceil M_P/16\rceil\) prefill blocks and
\(\lceil M_D/16\rceil\) decode blocks. Both colocated baselines use
a unified cache with \(\lceil M_D/16\rceil\) blocks.

\paragraph{Admission Control.}
An external proxy enforces the offline schedule before forwarding requests
to the two vLLM instances. It waits until the full batch is registered,
then applies the algorithm's group barriers and admission order. To enforce
the prefill capacity, the proxy counts both requests executing prefill and
requests retaining their prompt cache while awaiting decode. Before forwarding
request \(i\) to decode, it checks
\(\sum_{j\in\mathcal D}h_j+h_i\leq M_D\), where \(\mathcal D\) contains
requests admitted to decode but not yet completed. When request \(i\) completes,
the proxy releases \(h_i\) from this accounting budget. This is a software
admission rule, not a physical reservation of \(h_i\) KV cache units inside
vLLM, which manages actual KV allocation separately.

\paragraph{KV Transfer.}
The prefill and decode instances transfer KV state using vLLM's
P2pNcclConnector. The recorded GPU topology shows an NV12 connection
between the two H100 GPUs. Although the theoretical model assumes
instantaneous KV transfer, the completion times measured in the GPU
experiments include KV transfer overhead.

\paragraph{Memory Capacity.}
We use \(n=250\) requests and vary
\(\kappa\in\{1.5,2,2.5,3,3.5,4\}\), where
\(M_P=\lceil\kappa\max_i p_i\rceil\) and
\(M_D=\lceil\kappa\max_i h_i\rceil\).

Figure~\ref{fig:gpu_capacity_tct} shows that both algorithms reduce TCT
relative to P/D across all tested capacities on the code workloads. On the
conversation workloads, their TCT advantage diminishes as capacity increases.

\begin{figure}[htp]
	\centering
	\includegraphics[width=0.90\linewidth]{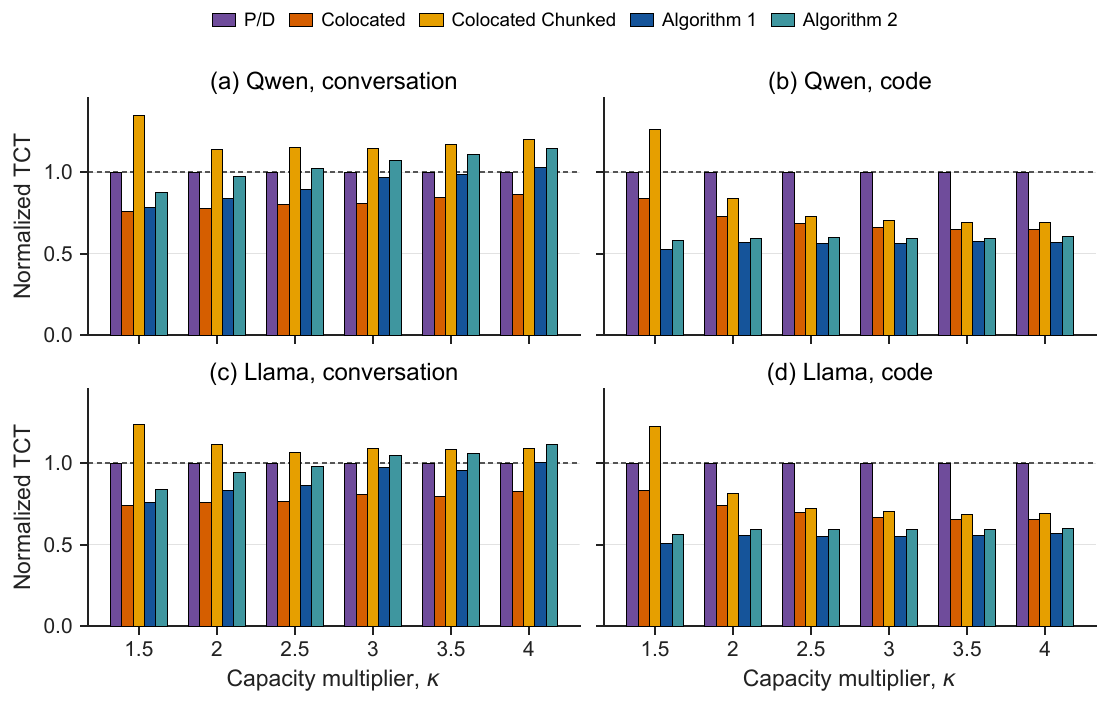}
	\caption{Total completion time as \(\kappa\) varies, normalized by the P/D result.}
	\label{fig:gpu_capacity_tct}
\end{figure}

At \(\kappa=2\), Figure~\ref{fig:gpu_capacity_tail} shows that both
algorithms have lower P50 but higher P99 request completion times than
P/D across all four model and workload combinations. Their P95 times
are lower than P/D on code workloads but higher on conversation workloads.

\begin{figure}[htp]
	\centering
	\includegraphics[width=0.90\linewidth]{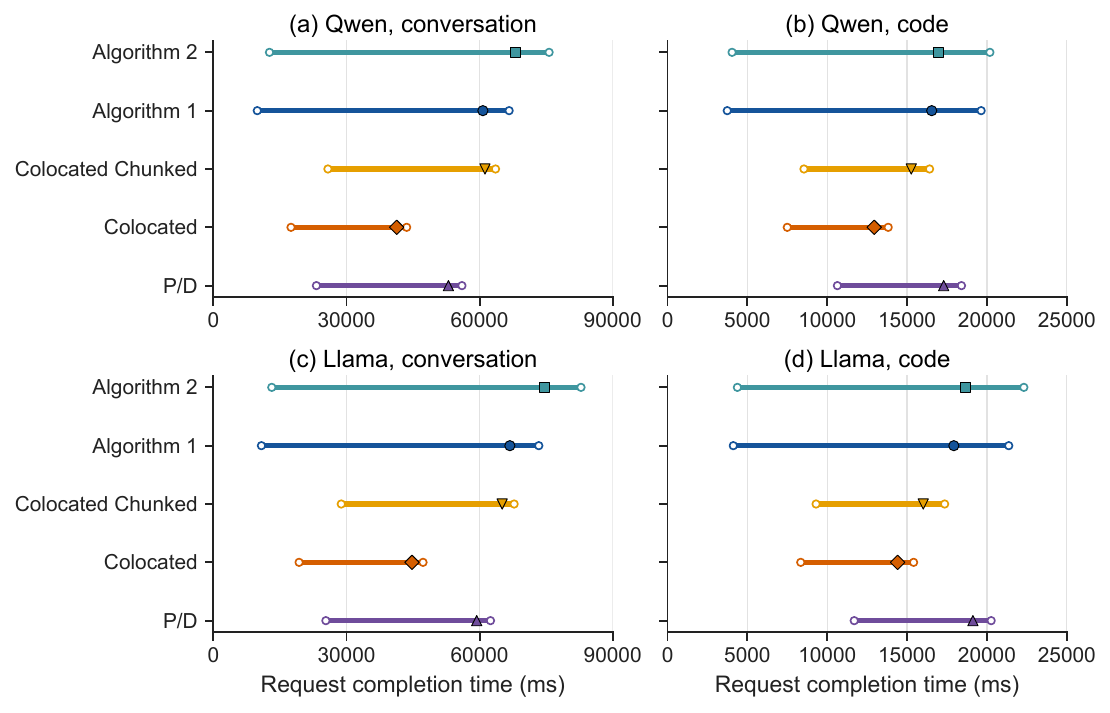}
	\caption{Request completion times at \(\kappa=2\) and \(n=250\).
		Each segment spans P50 to P99; the filled marker denotes P95.}
	\label{fig:gpu_capacity_tail}
\end{figure}

\paragraph{Prefill and Decode Demands.}
We use \(n=250\) requests and target three demand settings:
\((\alpha,\beta)=(0.9,0.5)\), \((0.75,0.75)\), and \((0.5,0.9)\),
representing prefill-heavy, balanced, and decode-heavy demands,
respectively.

Figure~\ref{fig:gpu_pressure_tct} shows that, for both models, both
algorithms reduce TCT relative to P/D on code workloads under all three
demand settings. The reductions are generally larger than those on
conversation workloads. On conversation workloads,
Algorithm~\ref{alg:bounded_scheduling} remains below P/D, whereas
Algorithm~\ref{alg:general_scheduling} slightly exceeds P/D for Qwen
under prefill-heavy demand.

\begin{figure}[H]
	\centering
	\includegraphics[width=0.90\linewidth]{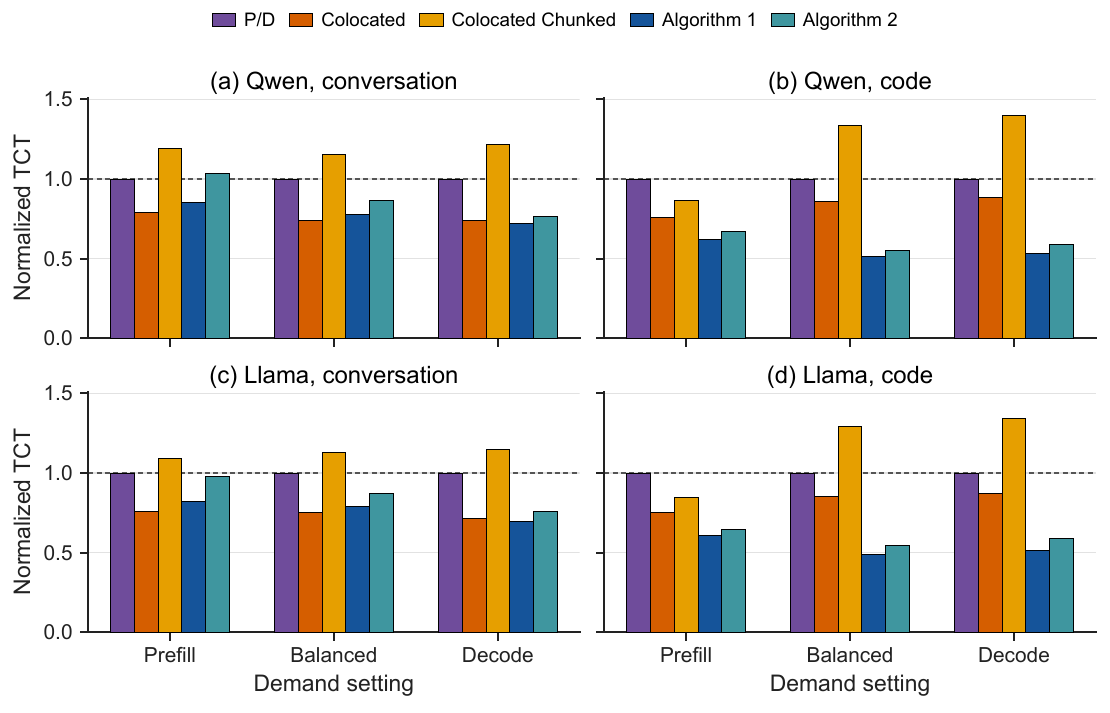}
	\caption{Total completion time for different values of \(\alpha\) and
		\(\beta\), normalized by the P/D result.}
	\label{fig:gpu_pressure_tct}
\end{figure}

Time to first token (TTFT) is measured from the time each request is sent
until its first output token is received. In the balanced setting,
Figure~\ref{fig:gpu_pressure_tail} shows that both algorithms have lower
P50 TTFT than P/D for both models and workloads. Their P99 TTFT is higher
than P/D, except for Algorithm~\ref{alg:bounded_scheduling} on the Llama
code workload, where it is slightly lower.

\begin{figure}[htp]
	\centering
	\includegraphics[width=0.90\linewidth]{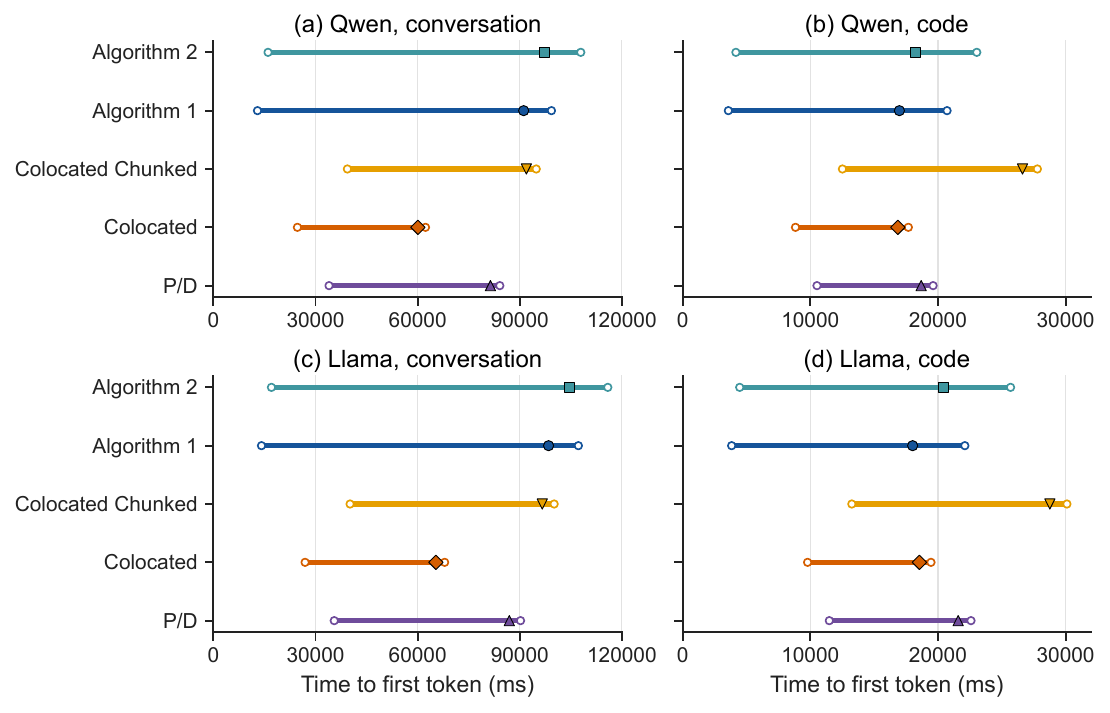}
	\caption{TTFT at \((\alpha,\beta)=(0.75,0.75)\).
		Each segment spans P50 to P99; the filled marker denotes P95.}
	\label{fig:gpu_pressure_tail}
\end{figure}

\paragraph{Number of Requests.}
We vary \(n\in\{50,100,150,200,250\}\). The capacities are fixed across \(n\) at
\((M_P,M_D)=(8214,8352)\) for conversation and
\((14872,14896)\) for code, corresponding to the \(n=250\),
\(\kappa=2\) settings.

Figure~\ref{fig:gpu_scaling_tct} shows that the algorithms' normalized
TCT generally decreases as \(n\) increases. Both algorithms remain
below P/D across all batch sizes on code workloads; on conversation
workloads, their TCT advantage appears at larger batch sizes.

\begin{figure}[htp]
	\centering
	\includegraphics[width=0.90\linewidth]{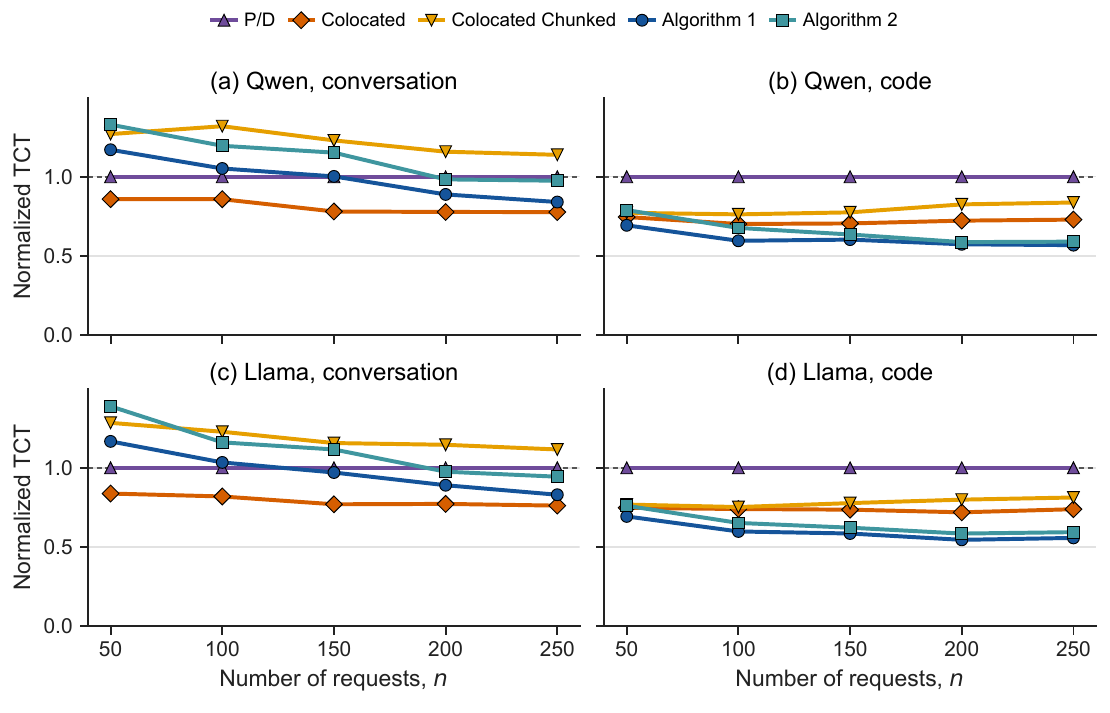}
	\caption{Total completion time as \(n\) varies, normalized by the
		P/D result.}
	\label{fig:gpu_scaling_tct}
\end{figure}

\begin{figure}[H]
	\centering
	\includegraphics[width=0.90\linewidth]{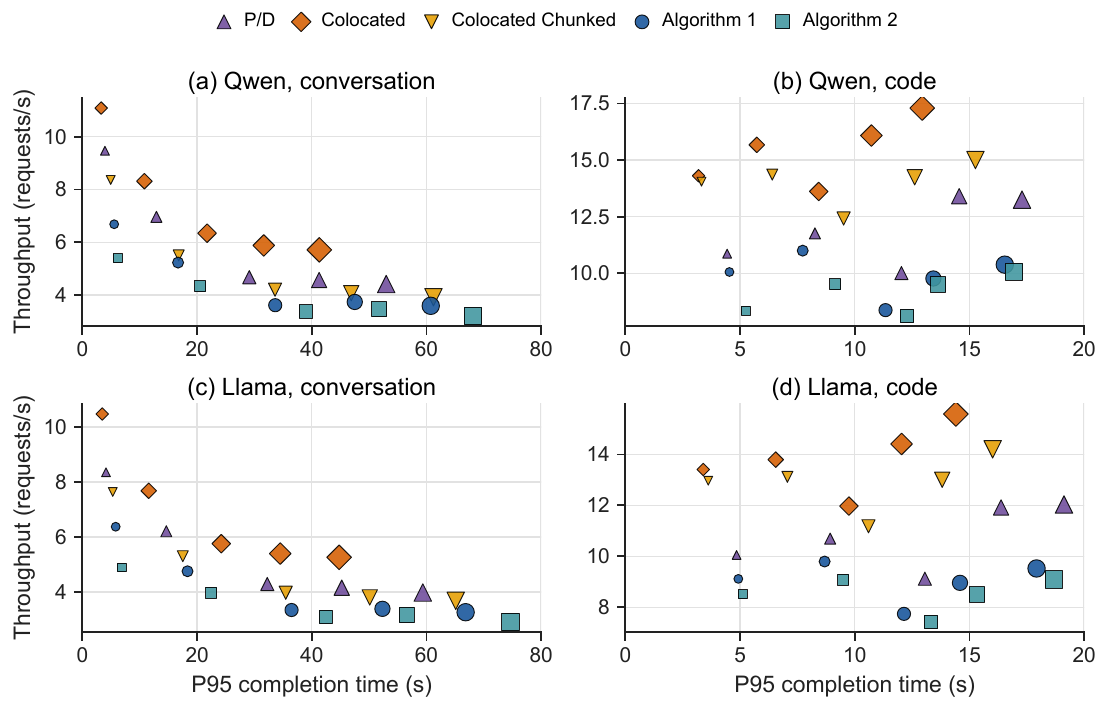}
	\caption{Throughput versus P95 request completion time as \(n\) varies.
		Marker size increases with \(n\).}
	\label{fig:gpu_scaling_tradeoff}
\end{figure}

Figure~\ref{fig:gpu_scaling_tradeoff} compares throughput and P95 request
completion time at each batch size. At comparable values of \(n\),
Colocated achieves higher throughput and lower P95 completion time than
either algorithm.

\begin{figure}[htp]
	\centering
	\includegraphics[width=0.90\linewidth]{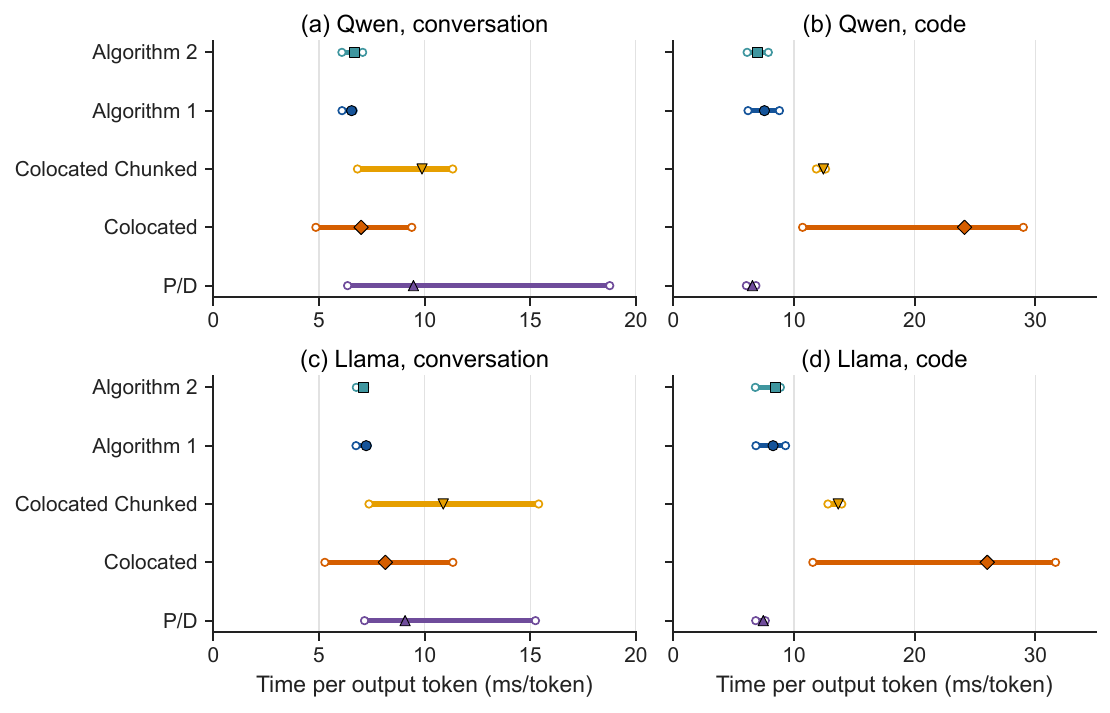}
	\caption{TPOT at \(n=250\). Each segment spans P50 to P99;
		the filled marker denotes P95.}
	\label{fig:gpu_scaling_tpot}
\end{figure}

Time per output token (TPOT) measures the average generation time per output
token after the first token. At \(n=250\),
Figure~\ref{fig:gpu_scaling_tpot} shows that both algorithms have lower P95
TPOT than P/D on the conversation workloads and slightly higher P95 TPOT
than P/D on the code workloads. Both algorithms remain below Colocated
Chunked, while Colocated has substantially higher P95 TPOT on the code
workloads.

\end{document}